%

\documentclass[aap,MSNbibl,seceqn,dvips]{arximspdf}
\usepackage{graphicx}
%

\doi{10.1214/13-AAP957} 
\volume{24}
\issue{4}
\pubyear{2014}
\firstpage{1585}
\lastpage{1620}

\makeatletter
\newcommand{\widebar}{\overline}
\newcommand{\rrVert}{\Vert}
\newcommand{\rrvert}{\vert}
\newcommand{\llVert}{\Vert}
\newcommand{\llvert}{\vert}
\newtheorem{theorem}{Theorem}[section]
\newtheorem{lemma}[theorem]{Lemma}
\newtheorem{coro}[theorem]{Corollary}
\newproclaim{assp}[theorem]{Assumption}
\def\X{\widebar{X}}
\newcommand{\halfs}{1/2}
\makeatother

\begin{document}
\begin{frontmatter}

\title{Antithetic multilevel Monte Carlo estimation
for~multi-dimensional SDEs without\\ L\'evy area simulation}
\pdftitle{Antithetic multilevel Monte Carlo estimation
for multi-dimensional SDEs without Levy area simulation}
\runtitle{Antithetic multilevel Monte Carlo}

\begin{aug}
\author{\fnms{Michael B.} \snm{Giles}\ead[label=e1]{mike.giles@maths.ox.ac.uk}}
\and
\author{\fnms{Lukasz} \snm{Szpruch}\corref{}\ead[label=e2]{szpruch@maths.ox.ac.uk}}
\runauthor{M.~B. Giles and L. Szpruch}
\affiliation{University of Oxford}
\address{Mathematical Institute\\
University of Oxford\\
Oxford OX1 3LB\\
United Kingdom\\
\printead{e1}\\
\phantom{E-mail:\ }\printead*{e2}} 
\end{aug}

\received{\smonth{2} \syear{2012}}
\revised{\smonth{11} \syear{2012}}

%
\begin{abstract}
In this paper we introduce a new multilevel Monte Carlo
(MLMC) estimator for multi-dimensional SDEs driven by Brownian motions.
Giles has previously shown that if we combine a numerical approximation
with strong order of convergence $O(\Delta t)$ with MLMC we can reduce
the computational complexity to estimate expected values of
functionals of SDE solutions with a root-mean-square error of
$\epsilon$ from $O(\epsilon^{-3})$ to $O(\epsilon^{-2})$. However, in general,
to obtain a rate of strong convergence higher than $O(\Delta t^{1/2})$ requires
simulation, or approximation, of L{\'e}vy areas. In this paper, through
the construction of a suitable antithetic multilevel correction estimator,
we are able to avoid the simulation of L{\'e}vy areas and still achieve an
$O(\Delta t^2)$ multilevel correction variance for smooth payoffs, and almost
an $O(\Delta t^{3/2})$ variance for piecewise smooth payoffs, even
though there
is only $O(\Delta t^{1/2})$ strong convergence. This results in an
$O(\epsilon^{-2})$
complexity for estimating the value of European and Asian put and call options.
\end{abstract}

%
\begin{keyword}[class=AMS]
\kwd{65C30}
\kwd{65C05}
\end{keyword}
\begin{keyword}
\kwd{Monte Carlo}
\kwd{multilevel}
\kwd{L{\'e}vy area}
\kwd{stochastic differential equation}
\end{keyword}
\pdfkeywords{65C30, 65C05, Monte Carlo, multilevel, Levy area, stochastic differential equation}

\end{frontmatter}

\section{Introduction}

In many financial engineering applications, one is interested in the expected
value of a financial derivative whose payoff depends upon the solution
of a stochastic differential equation (SDE). Using a simple Monte Carlo
method with
a numerical discretisation with first order weak convergence, to
achieve a
root-mean-square error of $\epsilon$ would require $O(\epsilon
^{-2})$ independent
paths, each with $O(\epsilon^{-1})$ time steps, giving
a computational complexity which is $O(\epsilon^{-3})$,~\cite{dg95}.

Recently, Giles \cite{giles08} introduced a multilevel Monte Carlo (MLMC)
estimator which enables a reduction of this computational cost to
$O(\epsilon^{-2} (\log{(1/\epsilon)})^2)$ for Lipschitz payoffs
when using the
Euler--Maruyama discretisation. For other discontinuous and path-dependent
payoff functions, the complexity is poorer \cite{ghm09}.
The efficiency of the MLMC method is influenced by the strong
convergence order of the discretisation, and subsequent research
using MLMC with the first-order Milstein discretisation for scalar SDEs,
improved the complexity significantly to $O(\epsilon^{-2})$ for
digital, lookback and barrier options \cite{giles08b}.
However, a weakness of the Milstein discretisation is that in multiple
dimensions it generally requires the simulation of iterated It{\^o}
integrals known as L{\'e}vy areas, for which
there is no known efficient method except in dimension 2
\cite{gl94,rw01,wiktorsson01}.

Let $(\Omega,{\mathcal{F}},\{{\mathcal{F}}_t\}_{t\geq0},\mathbb{P})$
be a complete probability space with a filtration
$\{{\mathcal{F}}_t\}_{t\geq0}$ satisfying the usual conditions,
and let
$w(t)$
be a $D$-dimensional
Brownian motion defined on the probability space. We consider the
numerical approximation of a~general class of multi-dimensional SDEs
driven by Brownian of the form
%
\begin{equation}
\label{eqSDE} \mathrm{d}x(t)=f\bigl(x(t)\bigr)\,\mathrm{d}t+g\bigl(x(t)\bigr) \,
\mathrm{d}w(t),
\end{equation}
where $x(t)\in\mathbb{R}^{d}$ for each $t \ge 0$,
$f\in C^2 (\mathbb{R}^{d}, \mathbb{R}^{d})$,
$g\in C^2 (\mathbb{R}^{d}, \mathbb{R}^{d\times D})$,
and for simplicity we assume a fixed initial value
$x_{0}\in\mathbb{R}^{d}$.

In this paper we are primarily concerned with estimating $\mathbb
{E}[P(x(T))]$, the
expected value of a payoff depending on the solution at a fixed time $T$,
defining the tensor $h_{ijk}(x)$ as
%
\begin{equation}
\qquad h_{ijk}(x) = \frac{1}{2} \sum_{l=1}^d
g_{lk}(x) \frac{\partial g_{ij}}{\partial x_l}(x), \qquad i=1,\ldots ,d\mbox{ and } k,j=1,\ldots,D, \label{eqh}
\end{equation}
when using $N$ uniform timesteps $\Delta t = T/N$,
the $i$th component of the first order Milstein approximation
$\widehat{X}_n \approx x(n \Delta t)$ has the form \cite{kp92}
%
\begin{eqnarray}\label{eqMilstein}
\widehat{X}_{i,n+1} &=& \widehat{X}_{i,n} +
f_i(\widehat{X}_n) \Delta t + \sum
_{j=1}^D g_{ij}(\widehat{X}_n)
\Delta w_{j,n}
\nonumber\\[-8pt]\\[-8pt]
&&{} + \sum_{j,k=1}^D h_{ijk}(\widehat{X}_n) (\Delta w_{j,n}\Delta
w_{k,n} - \Omega_{jk} \Delta t - A_{jk,n} ),\nonumber
\end{eqnarray}
where
$\Omega$ is the correlation matrix for the driving Brownian paths,
and $A_{jk,n}$ is the L{\'e}vy area defined as
\[
A_{jk,n} = \int_{t_n}^{t_{n+1}} \bigl(w_j(t) - w_j(t_n) \bigr) \,
\mathrm{d} w_k(t) - \int_{t_n}^{t_{n+1}}
\bigl(w_k(t) - w_k(t_n)
\bigr) \,\mathrm{d} w_j(t).
\]
In some applications, the diffusion coefficient $g(x)$ has a commutativity
property which gives
$h_{ijk}(x) = h_{ikj}(x)$ for all $i,j,k$.
In that case, because the L{\'e}vy areas are anti-symmetric
(i.e., $A_{jk,n}=-A_{kj,n}$), it follows that
$h_{ijk}(X_n) A_{jk,n} + h_{ikj}(X_n) A_{kj,n} = 0$
and therefore the terms involving the L{\'e}vy areas cancel and so it is not
necessary to simulate them. However, this only happens in special cases.


Clark and Cameron \cite{cc80} proved for a particular SDE that it is
impossible to achieve a better order of strong convergence than the
Euler--Maruyama discretisation when using just the discrete increments
of the underlying Brownian motion. The analysis was extended by
M\"{u}ller--Gronbach \cite{muller02} to general SDEs.
As a consequence if we use the standard MLMC method with the Milstein
scheme \mbox{without} simulating the L{\'e}vy areas the complexity will remain
the same as for Euler--Maruyama.
Nevertheless, in this paper we show
that by constructing a suitable antithetic MLMC estimator one can
neglect the
L{\'e}vy areas and still obtain a~multilevel correction estimator with a
variance which decays at the same rate as the scalar Milstein estimator.
This demonstrates that a high order of the strong convergence is not necessary
for our new estimator to achieve the optimal complexity $O(\epsilon^{-2})$.


We begin the paper by reviewing the multilevel Monte Carlo approach,
introducing the idea of the antithetic estimator and bounding the
behaviour of its variance under certain conditions.
Because of its simplicity, we then consider Clark and Cameron's model
problem, and prove that the antithetic path simulations do satisfy the
required conditions to give an $O(\Delta t^2)$ variance convergence
for a
smooth payoff.
This then motivates the subsequent analysis for the general class
of multi-dimensional SDEs.
We support our analysis by suitable numerical experiments in which
we demonstrate the superiority of antithetic MLMC over the standard MLMC
for both the Clark--Cameron SDE and the Heston stochastic volatility model.
The \hyperref[app]{Appendix} contains the detailed proofs of the key theorems.

In this paper we restrict attention to financial applications
with either a European payoff, dependent on the final value $x(T)$,
or an Asian payoff, dependent on the average of $x(t)$ over the time
interval $[0,T]$.
It is proved that when the payoff is twice differentiable, with
bounded derivatives, the rate of convergence of the multilevel
correction variance is doubled from $O(\Delta t)$ to $O(\Delta t^2)$.
If the payoff is Lipschitz, and twice differentiable almost everywhere,
then the rate of convergence is reduced to $O(\Delta t^{3/2})$, but
this is
still sufficient to make the overall complexity $O(\epsilon^{-2})$ to achieve
a root-mean-square accuracy of $\epsilon$.

\section{Multilevel Monte Carlo estimation} \label{secMLMC}

\subsection{MLMC estimators}

In its most general form, multilevel Monte Carlo simulation uses a
number of
levels of resolution, $\ell=0,1,\ldots,L$, with $\ell=0$ being the coarsest,
and $\ell=L$ being the finest. In the context of a SDEs simulation,
level~$0$
may have just one timestep for the whole time interval $[0,T]$, whereas level
$L$ might have $2^L$ uniform timesteps.

If $P$
denotes the payoff (or other output functional of interest), and
$P_{\ell}$
denote its approximation on level $l$, then
the expected value $\mathbb{E}[P_L]$ on the finest level is equal to
the expected value $\mathbb{E}[P_0]$
on the coarsest level plus a sum of corrections which give the
difference in
expectation between simulations on successive levels,
%
\begin{equation}
\label{eqMLMC} \mathbb{E}[P_{L}] = \mathbb{E}[P_{0}] + \sum
_{\ell=1}^{L}\mathbb {E}[P_{\ell}-P_{\ell-1}].
\end{equation}
The idea behind MLMC is to independently estimate each of the expectations
on the right-hand side of (\ref{eqMLMC}) in a way which minimises the overall
variance for a given computational cost. Let $Y_0$ be an estimator for
$\mathbb{E}[P_{0}]$ using $N_{0}$ samples, and let $Y_\ell$, $\ell
>0$, be an
estimator for $\mathbb{E}[P_{\ell} - P_{\ell-1}]$ using $N_{\ell}$
samples. The simplest
estimator is a mean of $N_{\ell}$ independent samples, which for $\ell
>0$ is
%
\begin{equation}
\label{eqest} Y_{\ell} = N_{\ell}^{-1}\sum
_{i=1}^{N_{\ell}} \bigl(P^{i}_{\ell
}-P^{i}_{\ell-1}
\bigr).
\end{equation}
The key point here is that $P^{i}_{\ell} - P^{i}_{\ell-1}$ should come
from two discrete approximations for the same underlying stochastic sample,
so that on finer levels of resolution the difference is small (due to strong
convergence) and so the variance is also small. Hence very few samples
will be
required on finer levels to accurately estimate the expected value.

Here we recall the theorem from \cite{giles2012stochastic} (which is a
slight generalisation of
the original theorem in \cite{giles08}) which gives the complexity of
MLMC estimation.

%
\begin{theorem} \label{thcomplexity}
Let $P$ denote a functional of the solution of a stochastic differential
equation, and let $P_\ell$ denote the corresponding level $\ell$ numerical
approximation. If there exist independent estimators $Y_\ell$ based on
$N_\ell$ Monte Carlo samples, and positive constants
$\alpha, \beta, \gamma, c_1, c_2, c_3$ such that
$\alpha\geq\frac{1}{2} \min(\beta,\gamma)$ and:
\begin{longlist}[(iii)]
\item[(i)]
$\llvert  \mathbb{E}[P_\ell- P] \rrvert  \leq c_1 2^{-\alpha
\ell}$,\vspace*{5pt}

\item[(ii)]
$\displaystyle
\mathbb{E}[Y_\ell] = \cases{
\mathbb{E}[P_0], &\quad$\ell=0$, \vspace*{2pt}\cr
\mathbb{E}[P_\ell- P_{\ell-1}], &\quad$\ell>0$,}$\vspace*{5pt}

\item[(iii)]
$\mathbb{V}[Y_\ell] \leq c_2 N_\ell^{-1} 2^{-\beta \ell}$,\vspace*{5pt}

\item[(iv)]
$C_\ell\leq c_3 N_\ell 2^{\gamma \ell}$,
where $C_\ell$ is the computational complexity of $Y_\ell$,
\end{longlist}
then there exists a positive constant $c_4$ such that for any
$\epsilon < e^{-1}$
there are values $L$ and $N_\ell$ for which the multilevel estimator
\[
Y = \sum_{\ell=0}^L Y_\ell
\]
has a mean-square-error with bound
\[
\mbox{MSE} \equiv\mathbb{E} \bigl[ \bigl(Y - \mathbb{E}[P] \bigr)^2 \bigr]
< \epsilon^2
\]
with a computational complexity $C$ with bound
\[
C \leq\cases{ c_4 \epsilon^{-2}, &\quad$\beta>\gamma$,
\vspace*{2pt}\cr
c_4 \epsilon^{-2} \bigl(\log{(1/\epsilon)}
\bigr)^2, &\quad$\beta=\gamma$,
\vspace*{2pt}\cr
c_4
\epsilon^{-2-(\gamma- \beta)/\alpha}, &\quad$0<\beta<\gamma$.}
\]
\end{theorem}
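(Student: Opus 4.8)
The plan is to prove this complexity theorem by a direct constructive optimisation: given the assumptions, I would explicitly choose the number of levels $L$ and the sample counts $N_\ell$ so that the mean-square-error is below $\eps^2$, and then bound the resulting cost. The starting point is the standard decomposition of the MSE into bias-squared plus variance. Because the $Y_\ell$ are independent and satisfy the telescoping property (ii), we have $\E[Y] = \E[P_L]$, so the bias is $\E[P_L] - \E[P]$, controlled by (i) as $c_1\,2^{-\alpha L}$. The variance is $\V[Y] = \sum_{\ell=0}^L \V[Y_\ell]$, controlled by (iii). The plan is to split the $\eps^2$ budget, demanding that the squared bias be at most $\tfrac{1}{2}\eps^2$ and the variance be at most $\tfrac{1}{2}\eps^2$.

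First I would fix $L$ to kill the bias: choosing $L$ as the smallest integer with $c_1\,2^{-\alpha L} \le \tfrac{1}{\sqrt 2}\,\eps$ gives $L = O(\alpha^{-1}\log\eps^{-1})$, and one records the useful two-sided estimate $2^{-\alpha L} \approx \eps$ (up to constants) which feeds into the cost bound. Next I would optimise the $N_\ell$ for fixed $L$. Treating the $N_\ell$ as continuous and minimising the total cost $\sum_\ell c_3 N_\ell 2^{\gamma\ell}$ subject to the variance constraint $\sum_\ell c_2 N_\ell^{-1} 2^{-\beta\ell} \le \tfrac12\eps^2$ is a routine Lagrange-multiplier calculation yielding $N_\ell \propto 2^{-(\beta+\gamma)\ell/2}$ (times a factor of $\eps^{-2}$ and the sum $\sum_\ell 2^{-(\beta-\gamma)\ell/2}$). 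Substituting back gives a total cost proportional to $\eps^{-2}\bigl(\sum_{\ell=0}^L 2^{(\gamma-\beta)\ell/2}\bigr)^2$.

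The heart of the argument, and the only place where the three cases genuinely diverge, is evaluating this geometric-type sum $S = \sum_{\ell=0}^L 2^{(\gamma-\beta)\ell/2}$. When $\beta>\gamma$ the ratio is less than $1$, so $S$ is bounded by a constant independent of $L$, giving $C = O(\eps^{-2})$. When $\beta=\gamma$ each term is $1$, so $S = L+1 = O(\log\eps^{-1})$, and squaring yields the $(\log\eps)^2$ factor. When $\beta<\gamma$ the sum is dominated by its top term $2^{(\gamma-\beta)L/2}$; using $2^{-\alpha L}\approx\eps$ to write $2^{(\gamma-\beta)L/2} \approx \eps^{-(\gamma-\beta)/(2\alpha)}$ and squaring produces the exponent $\eps^{-2-(\gamma-\beta)/\alpha}$. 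This is the main obstacle in the sense that it requires care: one must track the constants so that the continuous-optimisation values of $N_\ell$ can be rounded up to integers without violating the budget, and one must verify that rounding inflates the cost only by a bounded factor.

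The final step is bookkeeping. I would set $N_\ell = \lceil \text{(optimal value)} \rceil$, check that the ceiling operation keeps the variance under $\tfrac12\eps^2$ (the $+1$ per level contributes at most an extra $O(\eps^2)$ once the constants are arranged, which is where the hypothesis $\alpha \ge \tfrac12\min(\beta,\gamma)$ is used to ensure $L$ is not so large that the rounding overhead dominates), and combine the squared-bias bound $\le\tfrac12\eps^2$ with the variance bound $\le\tfrac12\eps^2$ to conclude $MSE<\eps^2$. Collecting the cost in each of the three regimes and absorbing all constants into a single $c_4$ then gives the stated piecewise bound. Since this is the known MLMC complexity theorem from \cite{giles08,gr11}, I would expect the author's proof to follow exactly this bias/variance split, constrained optimisation, and case analysis on the sign of $\beta-\gamma$.
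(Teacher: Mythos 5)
Your proposal is correct, but note that the paper itself offers no proof of this statement: it simply recalls the theorem from the cited reference \cite{gr11} (a slight generalisation of \cite{giles08}), so there is no in-paper argument to compare against. The route you describe --- bias/variance splitting of the MSE, choosing $L$ so that $c_1 2^{-\alpha L}\le \eps/\sqrt{2}$, Lagrange-multiplier optimisation giving $N_\ell \propto \eps^{-2}\sqrt{\V[Y_\ell]/C_\ell}$, the resulting cost $\propto \eps^{-2}\bigl(\sum_\ell 2^{(\gamma-\beta)\ell/2}\bigr)^2$ with the three-way case analysis on the sign of $\beta-\gamma$, and the use of $\alpha\ge\fracs{1}{2}\min(\beta,\gamma)$ to control the integer-rounding overhead --- is exactly the standard proof given in those references, so your reconstruction is faithful to the argument the authors are invoking.
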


Without the simulation of L{\'e}vy areas, the strong order of convergence
of the Milstein discretisation $X(T)$ which is used is only $1/2$, so that
\[
\mathbb{E} \bigl[ \bigl\| x(T) - X(T) \bigr\|^2 \bigr] = O(\Delta t).
\]
Hence, for payoffs which are a Lipschitz function of the final value,
it follows that
\[
\mathbb{E} \bigl[ (P_\ell- P_{\ell-1})^2 \bigr] =
O(\Delta t)
\]
and therefore the estimator given by (\ref{eqest}) satisfies condition
(iii) in the theorem with $\beta= 1$ when $\Delta t \propto
2^{-\ell}$.
What we will show is that \textit{without improving the strong order of
convergence}
it is possible to construct an antithetic estimator for which $\beta=2$.

To do so, we need to exploit some flexibility in the construction of the
multilevel estimator.
In (\ref{eqest}) we have used the same estimator for the payoff
$P_{\ell}$ on every level $\ell$, and therefore (\ref{eqMLMC}) is a
trivial identity due to the telescoping summation. However, in
\cite{giles08b} Giles numerically showed that it can be better to use different
estimators for the finer and coarser of the two levels being
considered, $P^{f}_{\ell}$ when level $\ell$ is the finer level,
and $P^{c}_{\ell}$ when level $\ell$ is the coarser level.
In this case, we require that
%
\begin{equation}
\label{conMLMC} \mathbb{E}\bigl[P^{f}_{\ell}\bigr] = \mathbb{E}
\bigl[P^{c}_{\ell}\bigr]\qquad\mbox{for } \ell=1,\ldots,L,
\end{equation}
so that
\[
E \bigl[P^f_{L}\bigr] = \mathbb{E}\bigl[P^f_{0}
\bigr] + \sum_{\ell=1}^{L}\mathbb {E}
\bigl[P^{f}_{\ell}-P^{c}_{\ell-1}\bigr].
\]

The MLMC theorem is still applicable to this modified estimator. The
advantage is that it gives the flexibility to construct approximations for
which $P^{f}_{\ell}-P^{c}_{\ell-1}$ is much smaller than the original
$P_{\ell}-P_{\ell-1}$, giving a larger value for $\beta$, the rate
of variance
convergence in condition (iii) in the theorem.

\subsection{Antithetic MLMC estimator}

Based on the well-known method of antithetic variates (see, e.g.,
\cite{glasserman04}), the idea for the antithetic estimator is to
exploit the flexibility of the more general MLMC estimator\vspace*{-2pt} by defining
$P^{c}_{\ell-1}$ to be the usual payoff $P(X^c)$ coming from a level
$\ell- 1$ coarse simulation $X^c$, and define $P^{f}_{\ell}$
to be the average of the payoffs $P(X^{f}),P(X^{a})$ coming from an
antithetic pair of level $\ell$ simulations, $X^{f}$ and $X^{a}$.

$X^{f}$ will be defined in a way which corresponds naturally to the
construction of $X^{c}$. Its antithetic ``twin''
$X^{a}$ will be defined so that it has exactly the same distribution
as $X^{f}$, conditional on $X^c$, which ensures that
$
\mathbb{E}[ P(X^{f}) ] = \mathbb{E}[P(X^{a})]
$
and hence (\ref{conMLMC}) is satisfied, but at the same time
\[
\bigl( X^{f} - X^c \bigr) \approx- \bigl(
X^{a} - X^c \bigr)
\]
and therefore
\[
\bigl( P\bigl(X^{f}\bigr) - P\bigl(X^c\bigr) \bigr)
\approx- \bigl( P\bigl(X^{a}\bigr) - P\bigl(X^c\bigr)
\bigr),
\]
so that
$
\frac{1}{2}  ( P(X^{f}) + P(X^{a})  ) \approx P(X^c)$.
This leads to $\frac{1}{2}  ( P(X^{f}) + P(X^{a})  ) - P(X^c)$
having a much smaller variance than the standard estimator $P(X^{f})- P(X^c)$.

We now present a lemma which motivates the rest of the paper by giving
an upper bound on the convergence of the variance of
$\frac{1}{2}  ( P(X^{f}) + P(X^{a})  ) - P(X^c)$.

%
\begin{lemma} \label{lemmapayoff}
If $P\in C^2 (\mathbb{R}^{d}, \mathbb{R})$ and there exist
constants $L_1, L_2$ such that for all $x\in\mathbb{R}^{d}$
\[
\biggl\llVert \frac{\partial P}{\partial x} \biggr\rrVert \le L_1,\qquad \biggl
\llVert \frac{\partial^2 P} {\partial x^2} \biggr\rrVert \le L_2,
\]
then for $p\ge2$,
\begin{eqnarray*}
&& \mathbb{E} \bigl[ \bigl(\tfrac{1}{2} \bigl(P\bigl(X^{f}
\bigr) + P\bigl(X^{a}\bigr)\bigr) - P\bigl(X^c\bigr)
\bigr)^p \bigr]
\\
&&\qquad\le2^{p-1} L_1^p \mathbb{E} \bigl[
\bigl\llVert \tfrac{1}{2}\bigl(X^{f} + X^{a}\bigr) -
X^{c} \bigr\rrVert ^p \bigr] + 2^{-(p+1)}
L_2^p \mathbb{E} \bigl[ \bigl\llVert X^{f} -
X^{a} \bigr\rrVert ^{2p} \bigr].
\end{eqnarray*}
\end{lemma}

\begin{pf}
If we define $\X^f \equiv\frac{1}{2}(X^{f} + X^{a})$, then
a Taylor expansion gives
\[
P\bigl(X^{f}\bigr) = P\bigl(\X^f\bigr) +
\frac{\partial P}{\partial x}^T \bigl(\X^f\bigr) \bigl(X^{f}
- \X^f\bigr) + \frac{1}{2} \bigl(X^{f} -
\X^f\bigr)^T \frac{\partial^2 P} { \partial x^2}(\xi_1)
\bigl(X^{f} - \X^f\bigr)
\]
for some $\xi_1$ on the line between $\X^f$ and $X^{f}$.
Performing a similar expansion for $P(X^{a})$ and then averaging the two,
the linear terms cancel, and one obtains
\begin{eqnarray*}
\frac{1}{2} \bigl(P\bigl(X^{f}\bigr) + P\bigl(X^{a}
\bigr)\bigr) &=& P\bigl(\X^f\bigr) + \frac{1}{4}
\bigl(X^{f} - \X^f\bigr)^T \frac{\partial^2 P} { \partial x^2}(
\xi_1) \bigl(X^{f} - \X^f\bigr)
\\
&&{}+ \frac{1}{4} \bigl(X^{a} - \X^f
\bigr)^T \frac{\partial^2 P} { \partial x^2}(\xi_2) \bigl(X^{a} -
\X^f\bigr)
\\
&=& P\bigl(\X^f\bigr) + \frac{1}{8} \bigl(X^{f} -
X^{a}\bigr)^T \frac{\partial^2 P} { \partial x^2}(\xi_3)
\bigl(X^{f} - X^{a}\bigr)
\end{eqnarray*}
for some $\xi_3$ on the line between $X^{a}$ and $X^{f}$,
due to the mean value theorem.
We then obtain
\begin{eqnarray*}
\frac{1}{2} \bigl(P\bigl(X^{f}\bigr) + P\bigl(X^{a}
\bigr)\bigr) - P\bigl(X^{c}\bigr) &=& \frac{\partial P}{\partial x}^T (
\xi_4) \bigl(\X^f - X^c\bigr)
\\
&&{}  + \frac{1}{8} \bigl(X^{f} - X^{a}\bigr)^T
\frac{\partial^2 P} { \partial x^2}(\xi_3) \bigl(X^{f} - X^{a}
\bigr),
\end{eqnarray*}
%
for some $\xi_4$ on the line between $\X^f$ and $X^{c}$.
Hence,
\[
\bigl\llvert \tfrac{1}{2} \bigl(P\bigl(X^{f}\bigr) + P
\bigl(X^{a}\bigr)\bigr) - P\bigl(X^{c}\bigr) \bigr\rrvert
\leq L_1 \bigl\llVert \X^f - X^c \bigr\rrVert
+ \tfrac{1}{4} L_2 \bigl\llVert X^{f} -
X^{a} \bigr\rrVert ^{2}
\]
and the final result follows from the standard inequality
%
\begin{equation}\label{eqineq}
\Biggl\llvert \sum_{n=1}^N
a_n \Biggr\rrvert ^p \leq N^{p-1}\sum
_{n=1}^N |a_n|^p
\end{equation}
and then taking the expectation.
\end{pf}

In the multi-dimensional SDE applications considered in this paper,
we will show that the Milstein approximation with the L{\'e}vy areas
set to zero, combined with the antithetic construction, leads to
$X^{f}-X^{a} = O(\Delta t^{1/2})$ but $ \X^f-X^{c} = O(\Delta t)$.
Hence, the variance
$\mathbb{V}[\frac{1}{2} (P^{f}_{l} + P^{a}_{l}) -P^c_{l-1}]$
is $O(\Delta t^2)$, which is the order obtained for scalar SDEs using
the Milstein discretisation with its first order strong convergence.
We first show this for the simple Clark and Cameron model problem
which can be analysed in detail. We then extend the analysis to a
general class of multi-dimensional SDEs.

\section{Clark--Cameron example} \label{secCC}

\subsection{Clark--Cameron analysis}

The paper of Clark and Cameron \cite{cc80} addresses the question
of how accurately one can approximate the solution of an SDE driven
by an underlying multi-dimensional Brownian motion, using only
uniformly-spaced discrete Brownian increments. Their model problem is
%
\begin{eqnarray}\label{eqClark}
\mathrm{d}x_1(t) &=& \mathrm{d}w_{1}(t),
\nonumber\\[-8pt]\\[-8pt]
\mathrm{d}x_2(t) &=& x_1(t) \,\mathrm{d}w_2(t)\nonumber
\end{eqnarray}
with $x(0)=y(0)=0$, and zero correlation between the two Brownian
motions $w_1(t)$ and $w_2(t)$. These equations can be integrated exactly
over a time interval $[t_n,t_{n+1}]$, where $t_n = n \Delta t$, to give
%
\begin{eqnarray}
\label{eqCC} x_1(t_{n+1}) &=& x_1(t_n)
+ \Delta w_{1,n},
\nonumber\\[-8pt]\\[-8pt]
x_2(t_{n+1}) &=& x_2(t_n) +
x_1(t_n)\Delta w_{2,n} + \tfrac{1}{2}
\Delta w_{1,n}\Delta w_{2,n} + \tfrac{1}{2}
A_{12,n},\nonumber
\end{eqnarray}
where $\Delta w_{i,n} \equiv w_i(t_{n+1})-w_i(t_n)$, and $A_{12,n}$ is the
L{\'e}vy area defined as
\[
A_{12,n} = \int_{t_n}^{t_{n+1}} \bigl(
w_1(t) - w_1(t_n)
\bigr) \,\mathrm{d} w_2(t) - \int_{t_n}^{t_{n+1}}
\bigl(w_2(t) - w_2(t_n)
\bigr) \,\mathrm{d} w_1(t).
\]
This corresponds exactly to the Milstein discretisation presented
in (\ref{eqMilstein}), so for this simple model problem,
the Milstein discretisation is exact.

The point of Clark and Cameron's paper is that for a given set
of discrete Brownian increments, the value for $x_1(t_n)$ is
determined exactly for all $n$, but the value for $x_2(t_n)$
depends on the unknown L{\'e}vy areas. Since
$
\mathbb{E}[A_{12,n} | \Delta w_{1,n}, \Delta w_{2,n}] = 0$,
the conditional expected value is given by (\ref{eqCC}) with the L{\'e}vy
areas set to zero. In addition, it follows that for \textit{any} numerical
approximation $X(T)$ based solely on the set of discrete Brownian
increments~$\Delta w$,
\begin{eqnarray*}
\mathbb{E}\bigl[\bigl(x_2(T) - X_2(T)
\bigr)^2\bigr] &=&\mathbb{E}\bigl[ \mathbb{E}\bigl[
\bigl(x_2(T) - X_2(T)\bigr)^2 | \Delta w
\bigr] \bigr]
\\
&\geq&\mathbb{E}\bigl[ \mathbb{V}\bigl[x_2(T) | \Delta w \bigr]
\bigr]
\\
& = &\frac{1}{4} \sum_{n=0}^{N-1}
\mathbb{V}[ A_{12,n} ]
\\
& = & \frac{1}{4} T \Delta t.
\end{eqnarray*}
Hence, one cannot achieve better than $O(\Delta t^{1/2})$ strong
convergence, and the mean square error is minimised when the
inequality in the above equation is an equality, which is when
%
\begin{equation}
\label{eqMMSE} X_2(T) = \mathbb{E}\bigl[ x_2(T) | \Delta
w \bigr],
\end{equation}
which is achieved by setting the L{\'e}vy areas set to zero.

\subsection{Antithetic MLMC estimator}

We define a coarse path approximation $X^c$ with timestep $\Delta t$
by neglecting the L{\'e}vy area terms to give
%
\begin{eqnarray} \label{eqCC1}
X^c_{1,n+1} &=& X^c_{1,n} + \Delta
w_{1,n},
\nonumber\\[-8pt]\\[-8pt]
X^c_{2,n+1} &=& X^c_{2,n} +
X^c_{1,n}\Delta w_{2,n} + \tfrac{1}{2}
\Delta w_{1,n} \Delta w_{2,n}.\nonumber
\end{eqnarray}
This is equivalent to replacing the true Brownian path by a piecewise
linear approximation as illustrated in Figure~\ref{figBrownian}.

%
\begin{figure}

\includegraphics{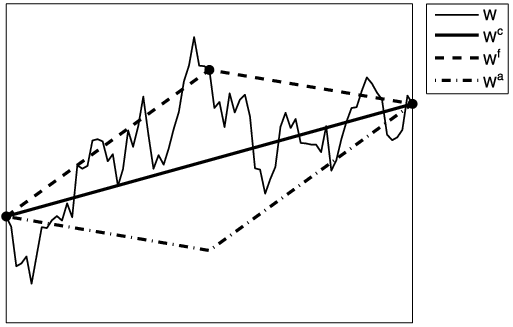}

\caption{Brownian path and approximations over one coarse
timestep.}\label{figBrownian}
\end{figure}

Similarly, we define the corresponding two half-timesteps of
the first fine path approximation $X^{f}$ by
\begin{eqnarray*}
X^{f}_{1,n+\halfs} &=& X^{f}_{1,n} + \delta
w_{1,n},
\\
X^{f}_{2,n+\halfs} &=& X^{f}_{2,n} +
X^{f}_{1,n} \delta w_{2,n} + \tfrac{1}{2}
\delta w_{1,n} \delta w_{2,n},
\\
X^{f}_{1,n+1} &=& X^{f}_{1,n+\halfs} + \delta
w_{1,n+\halfs},
\\
X^{f}_{2,n+1} &=& X^{f}_{2,n+\halfs} +
X^{f}_{1,n+\halfs} \delta w_{2,n+\halfs} + \tfrac{1}{2}
\delta w_{1,n+\halfs} \delta w_{2,n+\halfs}
\end{eqnarray*}
in which
$
\delta w_{n} \equiv w(t_{n+\halfs}) - w(t_{n}), \delta w_{n+\halfs}
\equiv w(t_{n+1}) - w(t_{n+\halfs})
$
are the Brownian increments over the first and second halves of the coarse
timestep, and so $\Delta w_n = \delta w_{n} + \delta w_{n+\halfs}$.
Using this relation, the equations for the two fine timesteps can be
combined to give an equation for the increment over the coarse timestep,
%
\begin{eqnarray}
X^{f}_{1,n+1} &=& X^{f}_{1,n} + \Delta
w_{1,n},\nonumber
\\
X^{f}_{2,n+1} &=& X^{f}_{2,n} +
X^{f}_{1,n} \Delta w_{2,n} + \tfrac{1}{2}
\Delta w_{1,n} \Delta w_{2,n} \label{eqCC2}
\\
&&{} + \tfrac{1}{2} ( \delta w_{1,n} \delta w_{2,n+\halfs} -
\delta w_{2,n} \delta w_{1,n+\halfs} ).\nonumber
\nonumber
\end{eqnarray}

The antithetic approximation $X_n^{a}$ is defined by exactly
the same discretisation except that the Brownian increments
$\delta w_{n}$ and $\delta w_{n+\halfs}$ are swapped,
as illustrated in Figure~\ref{figBrownian}. This gives
\begin{eqnarray*}
X^{a}_{1,n+\halfs} &=& X^{a}_{1,n} + \delta
w_{1,n+\halfs},
\\
X^{a}_{2,n+\halfs} &=& X^{a}_{2,n} +
X^{a}_{1,n} \delta w_{2,n+\halfs} + \tfrac{1}{2}
\delta w_{1,n+\halfs}\delta w_{2,n+\halfs},
\\
X^{a}_{1,n+1} &=& X^{a}_{1,n+\halfs} + \delta
w_{1,n},
\\
X^{a}_{2,n+1} &=& X^{a}_{2,n+\halfs} +
X^{a}_{1,n+\halfs} \delta w_{2,n} + \tfrac{1}{2}
\delta w_{1,n} \delta w_{2,n}
\end{eqnarray*}
and hence
%
\begin{eqnarray}
X^{a}_{1,n+1} &=& X^{a}_{1,n} + \Delta
w_{1,n},\nonumber
\\
X^{a}_{2,n+1} &=& X^{a}_{2,n} +
X^{a}_{1,n} \Delta w_{2,n} + \tfrac{1}{2}
\Delta w_{1,n} \Delta w_{2,n} \label{eqCC3}
\\
&&{} - \tfrac{1}{2} ( \delta w_{1,n} \delta w_{2,n+\halfs} -
\delta w_{2,n} \delta w_{1,n+\halfs} ).\nonumber
\end{eqnarray}
Swapping $\delta w_{n}$ and $\delta w_{n+\halfs}$ does not change the
distribution of the driving Brownian increments, and hence $X^{a}$
has exactly the same distribution as $X^{f}$. Note also the
change in sign in the last term in (\ref{eqCC2}) compared to
the corresponding term in~(\ref{eqCC3}). This is important
because these two terms cancel when the two equations are averaged.

These last terms correspond to the L{\'e}vy areas for the fine path and
the antithetic path, and the sign reversal is a particular instance of
a more general result for time-reversed Brownian motion, \cite{ks91}. If
$(w_{t}, 0\le t \le1 )$ denotes a Brownian motion
on the time interval $[0,1]$, then the time-reversed Brownian
motion $(z_{t}, 0\le t \le1 )$ defined by
%
\begin{equation}
\label{eqTR} z_{t} = w_{1} - w_{1-t},
\end{equation}
has exactly the same distribution, and it can be shown that its
L{\'e}vy area is equal in magnitude and opposite in sign to that of $w_t$.

%
%
\begin{lemma} \label{lemCCunbiased}
If $X_n^{f}$, $X_n^{a}$ and $X_n^c$ are as defined above, then
\[
X_{1,n}^{f} = X_{1,n}^{a} =
X_{1,n}^c,\qquad \tfrac{1}{2} \bigl(
X_{2,n}^{f} + X_{2,n}^{a} \bigr) =
X_{2,n}^c\qquad\forall n \leq N
\]
and
\[
\mathbb{E} \bigl[ \bigl( X_{2,N}^{f} - X_{2,N}^{a}
\bigr)^4 \bigr] = \tfrac{3}{4} T (T + \Delta t) \Delta
t^2.
\]
\end{lemma}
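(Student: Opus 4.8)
The plan is to dispatch the two pointwise identities by straightforward induction and then reduce the fourth-moment claim to a sum of independent increments. For the first identity, observe that in \eqref{eq:CC1}, \eqref{eq:CC2} and \eqref{eq:CC3} the three first components obey the identical recursion $X_{1,n+1}=X_{1,n}+\D w_{1,n}$ from the common initial value $0$, so all equal $\sum_{m<n}\D w_{1,m}$ for every $n$. For the second identity, set $\X_{2,n}:=\fracs{1}{2}(X_{2,n}^{\fp}+X_{2,n}^{\fm})$ and average \eqref{eq:CC2} and \eqref{eq:CC3}: the antisymmetric correction terms $\pm\fracs{1}{2}(\delta w_{1,n}\,\delta w_{2,n+\halfs}-\delta w_{2,n}\,\delta w_{1,n+\halfs})$ cancel, and since $X_{1,n}^{\fp}=X_{1,n}^{\fm}=X_{1,n}^c$ the averaged product term reduces to $X_{1,n}^c\,\D w_{2,n}$. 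What remains is exactly the coarse recursion \eqref{eq:CC1}, so matching the zero initial data and inducting gives $\X_{2,n}=X_{2,n}^c$ for all $n\le N$.

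Next I would introduce the difference $D_n:=X_{2,n}^{\fp}-X_{2,n}^{\fm}$ and subtract \eqref{eq:CC3} from \eqref{eq:CC2}. The shared $X_{2,n}$ and $\fracs{1}{2}\D w_{1,n}\D w_{2,n}$ contributions cancel, the product-term difference $(X_{1,n}^{\fp}-X_{1,n}^{\fm})\D w_{2,n}$ vanishes by the first identity, and the two sign-reversed corrections add. This leaves the decoupled recursion $D_{n+1}=D_n+\Delta A_n$, with $\Delta A_n:=\delta w_{1,n}\,\delta w_{2,n+\halfs}-\delta w_{2,n}\,\delta w_{1,n+\halfs}$ and $D_0=0$, hence $D_N=\sum_{n=0}^{N-1}\Delta A_n$. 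The crucial structural point, and the reason the recursion decouples rather than coupling $D_n$ to the past path, is precisely that the first identity kills the product-term difference. The four increments in each $\Delta A_n$ live on the two disjoint half-intervals of $[t_n,t_{n+1}]$ and split between the two independent Brownian components, so the $\Delta A_n$ are i.i.d., mean zero, and $D_N$ is a sum of independent centred terms.

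Finally I would evaluate $\E[D_N^4]$ via the fourth-moment identity for sums of independent mean-zero variables, $\E[D_N^4]=\sum_n\E[\Delta A_n^4]+3\sum_{n\ne m}\E[\Delta A_n^2]\,\E[\Delta A_m^2]$, the remaining mixed terms dropping by independence and vanishing means. Writing the half-step increments as $N(0,\D t/2)$ variables gives $\E[\Delta A_n^2]=2(\D t/2)^2=\fracs{1}{2}\D t^2$ and, using the Gaussian fourth moment $3(\D t/2)^2$, $\E[\Delta A_n^4]=24(\D t/2)^4=\fracs{3}{2}\D t^4$. Substituting and using $N\,\D t=T$ yields $\E[D_N^4]=\fracs{3}{2}N\D t^4+\fracs{3}{4}N(N\!-\!1)\D t^4=\fracs{3}{4}N(N\!+\!1)\D t^4=\fracs{3}{4}\,T(T\!+\!\D t)\,\D t^2$. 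The only genuine labour here is this moment bookkeeping, namely pinning down the Gaussian fourth moment and the combinatorial factor $3$ on the cross-terms; everything else follows immediately from the telescoping recursion and the independence of the $\Delta A_n$.
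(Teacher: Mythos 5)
Your proposal is correct and follows essentially the same route as the paper: identical recursions for the first components, cancellation of the antithetic \Levy-area terms under averaging, the telescoping $X_{2,N}^{\fp}-X_{2,N}^{\fm}=\sum_{n}\left(\delta w_{1,n}\,\delta w_{2,n+\halfs}-\delta w_{2,n}\,\delta w_{1,n+\halfs}\right)$, and the same second and fourth moments $\fracs{1}{2}\D t^2$ and $\fracs{3}{2}\D t^4$ combined via the independence of the per-step terms. Your explicit fourth-moment formula for sums of independent centred variables is just a cleaner packaging of the paper's term-by-term expansion, and the arithmetic agrees.
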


\begin{pf}
Comparing (\ref{eqCC1}), (\ref{eqCC2}) and (\ref{eqCC3}), it
is clear that $X_{1,n}^{f}$, $X_{1,n}^{a}$ and $X_{1,n}^c$ all satisfy
the same difference equation and so are equal.
Given this, averaging the equations for $X_{2,n}^{f}$ and $X_{2,n}^{a}$
gives the same difference equation as for $X_{2,n}^{c}$, and so therefore
$\frac{1}{2}  ( X_{2,n}^{f} + X_{2,n}^{a}  ) = X_{2,n}^c$.
Finally, summing the difference of the equations for $X_{2,n}^{f}$ and
$X_{2,n}^{a}$ gives
\[
X_{2,N}^{f} - X_{2,N}^{a} = \sum
_{n=0}^{N-1} ( \delta w_{1,n}
\delta w_{2,n+\halfs} - \delta w_{2,n} \delta w_{1,n+\halfs} ).
\]
Since the $\delta w_{j,n}$ are all i.i.d. normal variables with variance
$\frac{1}{2}\Delta t$, it is easily shown that
\begin{eqnarray*}
\mathbb{E}\bigl[(\delta w_{1,n} \delta w_{2,n+\halfs} - \delta
w_{2,n} \delta w_{1,n+\halfs})^2\bigr] &=&\tfrac{1}{2} \Delta t^2,
\\
\mathbb{E}\bigl[(\delta w_{1,n} \delta w_{2,n+\halfs} - \delta
w_{2,n} \delta w_{1,n+\halfs})^4\bigr] &=&
\tfrac{3}{2} \Delta t^4
\end{eqnarray*}
and it then follows that
\[
\mathbb{E}\bigl[ \bigl(X_{2,N}^{f} - X_{2,N}^{a}
\bigr)^4 \bigr] = \biggl( \frac{1}{2} \Delta t^2
\biggr)^2 \frac{N(N - 1)}{2} \frac{4\times3}{2} + \frac{3}{2}
\Delta t^4 N = \frac{3}{4} T (T + \Delta t) \Delta
t^2.
\]
In the above derivation, when expanding $(X_{2,N}^{f} - X_{2,N}^{a})^4$,
the first contribution comes from terms of the form
$
(\delta w_{1,m} \delta w_{2,m+\halfs} - \delta w_{2,m} \delta
w_{1,m+\halfs})^2\*
(\delta w_{1,n} \delta w_{2,n+\halfs} - \delta w_{2,n} \delta
w_{1,n+\halfs})^2
$
for $m\neq n$, while the second contribution comes from terms of the form
$
(\delta w_{1,n} \delta w_{2,n+\halfs} - \delta w_{2,n} \delta
w_{1,n+\halfs})^4
$.
All other terms have zero expectation.
\end{pf}

Combining the above result with Lemma \ref{lemmapayoff} for
$p = 2$ gives a second order bound on the multilevel estimator
variance for payoffs satisfying the required smoothness conditions.
It is worth noting that an antithetic MLMC based on the simpler
Euler--Maruyama discretisation, omitting the term $\Delta w_{1,n}
\Delta w_{2,n}$
in~(\ref{eqCC1}), would not give similar benefits. The identity
$\frac{1}{2}  ( X_{2,n}^{f} + X_{2,n}^{a}  ) = X_{2,n}^c$
no longer holds, and a similar analysis to that in the proof above gives
\[
\mathbb{V} \bigl[ \tfrac{1}{2} \bigl( X_{2,N}^{f} +
X_{2,N}^{a} \bigr) - X_{2,N}^c \bigr]
= \mathbb{E} \bigl[ \bigl( \tfrac{1}{2} \bigl( X_{2,N}^{f}
+ X_{2,N}^{a} \bigr) - X_{2,N}^c
\bigr)^2 \bigr] = O(\Delta t).
\]
Hence, in the simple case in which the payoff is $P=X_2(T)$, the variance
of the antithetic multilevel estimator is first order, the same as the
standard MLMC, and not second order.

\section{General theory} \label{secGeneral}

\subsection{Milstein discretisation}

In this section we extend the analysis of the Clark--Cameron example
to general the multi-dimensional SDE (\ref{eqSDE}).
We make the standard assumptions that $f$, $g$ and $h$ have
a uniform Lipschitz bound, and so have uniformly bounded first
derivatives. In addition, we make the assumption that\vadjust{\goodbreak}
$f$ and $g$ have uniformly bounded second derivatives.
More formally, we have the following:

%
\begin{assp} \label{asgLip}
Let $f\in C^2 (\mathbb{R}^{d}, \mathbb{R}^{d})$ and $g\in C^2
(\mathbb{R}^{d}, \mathbb{R}^{d\times D})$.
There exists a constant $L$ such that for any $x\in\mathbb{R}^{d}$,
and for all $1\leq i \leq d$ and $1\leq j, k, l \leq D$,
\begin{eqnarray*}
\biggl\llvert \frac{\partial f_i} {\partial x_l}(x) \biggr\rrvert &\leq& L,\qquad \biggl\llvert
\frac{\partial g_{ij}} {\partial x_l}(x) \biggr\rrvert \leq L,\qquad \biggl\llvert
\frac{\partial h_{ijk}}{\partial x_l}(x) \biggr\rrvert \leq L,
\\
\biggl\llvert\frac{\partial^2 f_i} {\partial x_k
\partial x_l}(x) \biggr\rrvert &\leq& L,\qquad \biggl\llvert
\frac{\partial^2 g_{ij}}{\partial x_k\,\partial x_l}(x) \biggr\rrvert \leq L.
\end{eqnarray*}
\end{assp}

Let us recall that the general Milstein scheme \cite{kp92} has the form
%
\begin{eqnarray}\label{eqgenMilst}
\widehat{X}_{i,n+1} &=& \widehat{X}_{i,n} +
f_i(\widehat{X}_n) \Delta t + \sum
_{j=1}^D g_{ij}(\widehat{X}_n)
\Delta w_{j,n}
\nonumber\\[-8pt]\\[-8pt]
&&{} + \sum_{j,k=1}^D h_{ijk}(\widehat{X}_n) (\Delta w_{j,n} \Delta
w_{k,n} - \Omega_{jk} \Delta t - A_{jk,n} ).\nonumber
\end{eqnarray}
As in the Clark--Cameron example, we drop the L{\'e}vy areas terms,
and instead use the truncated Milstein approximation
%
\begin{eqnarray}\label{eqreducedMilstein}
X_{i,n+1} &=& X_{i,n} + f_i(X_n)
\Delta t + \sum_{j=1}^D
g_{ij}(X_n) \Delta w_{j,n}
\nonumber\\[-8pt]\\[-8pt]
&&{} + \sum_{j,k=1}^D h_{ijk}(X_n) (\Delta
w_{j,n} \Delta w_{k,n} - \Omega_{jk} \Delta t ).\nonumber
\end{eqnarray}

Under Assumption \ref{asgLip} it is a standard result that the
moments of the general Milstein approximation $\widehat{X}_n$ are bounded,
and $\widehat{X}_n$ strongly converges to the solution of the SDE (\ref{eqSDE});
this remains true for the truncated Milstein approximation as
stated in the following lemma.
%
%
\begin{lemma} \label{lemmastrong}
For $p\ge2$, there
exists a constant $K_p$, independent of the time step, such that
\[
\mathbb{E} \Bigl[ \max_{0\leq n\leq N} \| X_n
\|^p \Bigr] \leq K_p
\]
and
\[
\mathbb{E} \Bigl[ \max_{0\leq n\leq N} \bigl\| X_n -
x(t_n) \bigr\|^p \Bigr] \leq K_p \Delta
t^{p/2}.
\]
\end{lemma}
\begin{pf}
The proof in \cite{muller02} follows the standard
method of analysis in references such as \cite{kp92,mt04}.\vadjust{\goodbreak}
\end{pf}

Hence, the rate of strong convergence is $O(\Delta t^{1/2})$, which is
no better than the Euler--Maruyama discretisation. Nevertheless,
we will show that the antithetic multilevel estimator has a variance which
converges to zero at the same rate as the full Milstein approximation.

%
\begin{coro} \label{corostrong}
For $p\ge2$, there
exists a constant $K_p$, independent of the time step, such that
\begin{eqnarray*}
\mathbb{E} \Bigl[ \max_{0\leq n\leq N} \bigl| f_i(X_n)
\bigr|^p \Bigr] &\leq& K_p,\qquad \mathbb{E} \Bigl[ \max
_{0\leq n\leq N} \bigl| g_{ij}(X_n) \bigr|^p
\Bigr] \leq K_p,
\\
\mathbb{E} \Bigl[ \max_{0\leq n\leq N}
\bigl| h_{ijk}(X_n) \bigr|^p \Bigr] &\leq& K_p
\end{eqnarray*}
for all $1\leq i \leq d$ and $1 \leq j, k \leq D$.
\end{coro}
\begin{pf}
The bounded first derivatives of $f(x), g(x), h(x)$ imply that they
grow no faster than linearly as $\|x\| \rightarrow\infty$, and the
result then follows from the bound in Lemma \ref{lemmastrong}.
\end{pf}

In order to derive appropriate bounds on the antithetic estimator
we also need the following lemma.
%
%
\begin{lemma} \label{lemonestep}
For $p\ge2$, there exists a constant
$K_p$, independent of the time step, such that
\[
\max_{0\leq n\leq N} \mathbb{E} \bigl[\llVert
X_{n+1} - X_n \rrVert ^p \bigr] \leq
K_p \Delta t^{p/2}.
\]
\end{lemma}

\begin{pf}
We
start from (\ref{eqreducedMilstein}) and inequality (\ref{eqineq})
which gives
\begin{eqnarray*}
\mathbb{E} \bigl[ \llvert X_{i,n+1} - X_{i,n} \rrvert
^{p} \bigr] &\le& 3^{p-1} \Biggl(\mathbb{E} \bigl[ \bigl
\llvert f_i(X_n) \Delta t \bigr\rrvert ^{p}
\bigr]+ \mathbb{E} \Biggl[ \Biggl\llvert \sum_{j=1}^D
g_{ij}(X_n) \Delta w_{j,n} \Biggr\rrvert
^{p} \Biggr]
\\
&&\hspace*{26pt}{} + \mathbb{E} \Biggl[ \Biggl\llvert \sum
_{j,k=1}^D h_{ijk}(X_n) (\Delta
w_{j,n}\Delta w_{k,n} - \Omega_{jk} \Delta t ) \Biggr
\rrvert ^{p} \Biggr] \Biggr).
\end{eqnarray*}
The first term on the right has a $O(\Delta t^p)$ bound due to
the uniform bound on $\mathbb{E} [ \llvert  f_i(X_n) \rrvert ^{p} ]$.
For the second term we note that because $\Delta w_{j,n}$ is
independent of~$X_n$, then
\[
\mathbb{E} \Biggl[ \Biggl\llvert \sum_{j=1}^D
g_{ij}(X_n) \Delta w_{j,n} \Biggr\rrvert
^{p} \Biggr] \leq D^{p-1} \sum_{j=1}^D
\mathbb{E} \bigl[ \bigl\llvert g_{ij}(X_n)\bigr\rrvert
^{p} \bigr] \mathbb{E} \bigl[ \llvert \Delta w_{j,n} \rrvert
^{p} \bigr]
\]
and we obtain a $O(\Delta t^{p/2})$ bound due to the uniform bound
on $\mathbb{E} [ \llvert  g_{ij}(X_n) \rrvert ^{p} ]$ and
standard results for
the moments of Brownian increments. The third term is handled in a
similar way and has a $O(\Delta t^p)$ bound.

Together these give a $O(\Delta t^{p/2})$ bound for
$\mathbb{E} [ \llvert  X_{i,n+1} - X_{i,n} \rrvert ^{p}
]$ for each $i$,
and hence also for
$\mathbb{E} [ \llVert  X_{n+1} - X_{n} \rrVert ^{p}  ]$.
\end{pf}
%
\subsection{Antithetic MLMC estimator}
Using the coarse timestep $\Delta t$, the coarse path approximation $X_n^c$,
is given by the Milstein approximation without the L{\'e}vy area term,
\begin{eqnarray*}
X^c_{i,n+1} &=& X^c_{i,n} +
f_i\bigl(X^c_n\bigr) \Delta t + \sum
_{j=1}^D g_{ij}
\bigl(X^c_n\bigr) \Delta w_{j,n}
\\[-2pt]
&&{}  + \sum_{j,k=1}^D h_{ijk}\bigl(X^c_n
\bigr) (\Delta w_{j,n} \Delta w_{k,n} - \Omega_{jk}
\Delta t ).
\end{eqnarray*}
The first fine path approximation $X_n^{f}$ uses the corresponding
discretisation with timestep $\Delta t/2$,
%
\begin{eqnarray}
\qquad X^{f}_{i,n+\halfs} &=& X^{f}_{i,n}
+ f_i\bigl(X^{f}_n\bigr) \Delta t/2 + \sum
_{j=1}^D g_{ij}
\bigl(X^{f}_n\bigr) \delta w_{j,n}
\nonumber\\[-10pt]\label{eqeqn1} \\[-10pt]
&&{} + \sum_{j,k=1}^D h_{ijk}
\bigl(X^{f}_n\bigr) (\delta w_{j,n} \delta
w_{k,n} - \Omega_{jk} \Delta t/2 ),\nonumber
\\
X^{f}_{i,n+1} &=& X^{f}_{i,n+\halfs} +
f_i\bigl(X^{f}_{n+\halfs}\bigr) \Delta t/2 + \sum
_{j=1}^D g_{ij}
\bigl(X^{f}_{n+\halfs}\bigr) \delta w_{n+\halfs}
\nonumber\\[-10pt] \label{eqfine+}\\[-10pt]
&&{} + \sum_{j,k=1}^D h_{ijk}
\bigl(X^{f}_{n+\halfs}\bigr) (\delta w_{j,n+\halfs} \delta
w_{k,n+\halfs} - \Omega_{jk} \Delta t/2 ),\nonumber
\end{eqnarray}
in which
%
\begin{equation}
\label{eqincrements} \delta w_{n} \equiv w(t_{n+\halfs}) -
w(t_{n}),\qquad \delta w_{n+\halfs} \equiv w(t_{n+1}) -
w(t_{n+\halfs})
\end{equation}
are the Brownian increments over the first and second halves of the coarse
timestep, and so $\Delta w_n = \delta w_{n} + \delta w_{n+\halfs}$.

The antithetic approximation $X_n^{a}$ is defined by exactly the same
discretisation, except that the Brownian increments $\delta w_{n}$ and
$\delta w_{n+\halfs}$ are swapped, so that
%
\begin{eqnarray}
\label{eqeqn2} X^{a}_{i,n+\halfs} &=& X^{a}_{i,n}
+ f_i\bigl(X^{a}_n\bigr) \Delta t/2 + \sum
_{j=1}^D g_{ij}
\bigl(X^{a}_n\bigr) \delta w_{n+\halfs}
\nonumber
\\[-2pt]
&&{} + \sum_{j,k=1}^D h_{ijk}
\bigl(X^{a}_n\bigr) (\delta w_{j,n+\halfs} \delta
w_{k,n+\halfs} - \Omega_{jk} \Delta t/2 ),
\nonumber\\[-10pt]\label{eqfine-} \\[-10pt]
X^{a}_{i,n+1} &=& X^{a}_{i,n+\halfs} +
f_i\bigl(X^{a}_{n+\halfs}\bigr) \Delta t/2 + \sum
_{j=1}^D g_{ij}
\bigl(X^{a}_{n+\halfs}\bigr) \delta w_{j,n}\nonumber
\\[-2pt]
&&{} + \sum_{j,k=1}^D h_{ijk}
\bigl(X^{a}_{n+\halfs}\bigr) (\delta w_{j,n} \delta
w_{k,n} - \Omega_{jk} \Delta t/2 ).
\nonumber
\end{eqnarray}

Since $\delta w_{n}$ and $\delta w_{n+\halfs}$ are independent
and identically distributed, $X^{a}$ has exactly the same
distribution as $X^{f}$, and hence $\mathbb{E}[P(X^{a})] = \mathbb
{E}[P(X^{f})]$.
In addition, the following lemma follows directly from
Lemmas \ref{lemmastrong} and \ref{lemonestep}.

%
\begin{lemma} \label{lemonestep2}
Let $X^{f}$ and $X^{a}$ be as defined above.
Then for $p\ge2$, there exists a constant $K_p$,
independent of the time step, such that
\begin{eqnarray*}
\mathbb{E} \Bigl[\max_{0\leq n\leq N} \bigl\llVert
X^{f}_n \bigr\rrVert ^p \Bigr] &\leq&
K_p,\qquad  \max_{0\leq n< N} \mathbb{E} \bigl[\bigl\llVert X^{f}_{n+\halfs} - X^{f}_n
\bigr\rrVert ^p \bigr] \leq K_p \Delta t^{p/2},
\\
\mathbb{E} \Bigl[\max_{0\leq n\leq N} \bigl\llVert
X^{a}_n \bigr\rrVert ^p \Bigr] &\leq&
K_p,\qquad  \max_{0\leq n< N} \mathbb{E} \bigl[\bigl\llVert X^{a}_{n+\halfs} - X^{a}_n
\bigr\rrVert ^p \bigr] \leq K_p \Delta t^{p/2}.
\end{eqnarray*}
\end{lemma}

\subsection{Numerical analysis}

The analysis is presented as a sequence of lemmas and theorems, with
the proofs deferred to the \hyperref[app]{Appendix}. The outline is as follows:
\begin{itemize}
\item
Lemma \ref{lemmadiff} bounds
$\| X_n^{f} - X_n^{a}\|$ over a coarse timestep;

\item
Lemma \ref{lemmaeqn1} gives a representation of the discrete equation
for $X_n^{f}$ over a coarse timestep, and Corollary \ref{coroeqn1}
gives the corresponding representation for $X_n^{a}$;

\item
Lemma \ref{lemmaeqn2} gives a representation
of the discrete equation describing the evolution of the average
$\X_n^f = \frac{1}{2}(X_n^{f} + X_n^{a})$ over a coarse timestep;

\item
Theorem \ref{thmmain} bounds $\|\X_n^f - X_n^{c}\|$ over a
coarse timestep.
\end{itemize}

%
\begin{lemma} \label{lemmadiff}
For all integers $p\geq2$, there exists a constant $K_p$ such that
\[
\mathbb{E} \Bigl[ \max_{0\leq n\leq N} \bigl\| X^{f}_n
- X^{a}_n \bigr\|^p \Bigr] \leq K_p
\Delta t^{p/2}.
\]
\end{lemma}

%
\begin{lemma} \label{lemmaeqn1}
Difference equation (\ref{eqfine+}) for $X_n^{f}$ can be expressed as
\begin{eqnarray*}
X^{f}_{i,n+1} &=& X^{f}_{i,n} +
f_i\bigl(X^{f}_n\bigr) \Delta t + \sum
_{j=1}^D g_{ij}
\bigl(X^{f}_n\bigr) \Delta w_{j,n}
\\
&&{} + \sum
_{j,k=1}^D h_{ijk}
\bigl(X^{f}_n\bigr) (\Delta w_{j,n}\Delta
w_{k,n} - \Omega_{jk} \Delta t )
\\
&&{} - \sum_{j,k=1}^D h_{ijk}
\bigl(X^{f}_n\bigr) ( \delta w_{j,n} \delta
w_{k,n+\halfs} - \delta w_{k,n} \delta w_{j,n+\halfs} )
\\
&&{} + M^{f}_{i,n} + N^{f}_{i,n},
\end{eqnarray*}
where 
$\mathbb{E}[ M^{f}_n | \mathcal{F}_n ] = 0$,
and for any integer $p \geq 2$ there exists a constant $K_p$
such that
\[
\max_{0\leq n\leq N} \mathbb{E} \bigl[\bigl\|
M^{f}_{n} \bigr\|^p \bigr] \leq K_p
\Delta t^{3p/2},\qquad \max_{0\leq n\leq N} \mathbb{E} \bigl[
\bigl\| N^{f}_{n} \bigr\|^p \bigr]
\leq K_p \Delta t^{2p}.
\]
\end{lemma}

%
\begin{coro} \label{coroeqn1}
Difference equation (\ref{eqfine-}) for $X_n^{a}$ can be expressed as
\begin{eqnarray*}
X^{a}_{i,n+1} &=& X^{a}_{i,n} +
f_i\bigl(X^{a}_n\bigr) \Delta t + \sum
_{j=1}^D g_{ij}
\bigl(X^{a}_n\bigr) \Delta w_{j,n}
\\
&&{} + \sum
_{j,k=1}^D h_{ijk}
\bigl(X^{a}_n\bigr) (\Delta w_{j,n}\Delta
w_{k,n} - \Omega_{jk} \Delta t )
\\
&&{} + \sum_{j,k=1}^D h_{ijk}
\bigl(X^{a}_n\bigr) ( \delta w_{j,n} \delta
w_{k,n+\halfs} - \delta w_{k,n} \delta w_{j,n+\halfs} )
\\
&&{} + M^{a}_{i,n} + N^{a}_{i,n},
\end{eqnarray*}
where
$\mathbb{E}[ M^{a}_n | \mathcal{F}_n ] = 0$,
and for any integer $p \geq 2$ there exists a constant $K_p$
such that
\[
\max_{0\leq n\leq N} \mathbb{E} \bigl[ \bigl\|
M^{a}_{n} \bigr\|^p \bigr] \leq K_p
\Delta t^{3p/2},\qquad \max_{0\leq n\leq N} \mathbb{E} \bigl[
\bigl\| N^{a}_{n} \bigr\|^p \bigr]
\leq K_p \Delta t^{2p}.
\]
\end{coro}

%
\begin{lemma} \label{lemmaeqn2}
The difference equation for
$\X_n^f\equiv\frac{1}{2}(X_n^{f} + X_n^{a})$
can be expressed~as
\begin{eqnarray*}
\X_{i,n+1}^f &=& \X^{f}_{i,n} +
f_i\bigl(\X^{f}_n\bigr) \Delta t + \sum
_{j=1}^D g_{ij}\bigl(
\X^{f}_n\bigr) \Delta w_{j,n}
\\
&&{} + \sum
_{j,k=1}^D h_{ijk}\bigl(\X^{f}_n
\bigr) (\Delta w_{j,n} \Delta w_{k,n} - \Omega_{jk}
\Delta t )
\\
&&{} + M_{i,n} + N_{i,n},
\end{eqnarray*}
where 
$\mathbb{E}[ M_n | \mathcal{F}_n ] = 0$,
and for any integer $p \geq 2$ there exists a constant $K_p$
such that
\[
\max_{0\leq n\leq N} \mathbb{E} \bigl[\|
M_{n} \|^p \bigr] \leq K_p \Delta
t^{3p/2},\qquad \max_{0\leq n\leq N} \mathbb{E} \bigl[\| N_{n} \|^p \bigr] \leq K_p \Delta
t^{2p}.
\]
\end{lemma}

%
\begin{theorem} \label{thmmain}
For all $p\geq2$, there exists a constant $K_p$ such that
\[
\mathbb{E} \Bigl[ \max_{0\leq n\leq N} \bigl\| \X_n^f
- X_n^c \bigr\|^p \Bigr] \leq K_p
\Delta t^p.
\]
\end{theorem}

\subsection{Piecewise linear interpolation analysis}

The piecewise linear interpolant $X^{c}(t)$ for the coarse path
is defined within the coarse timestep interval $[t_{k}, t_{k+1}]$ as
\[
X^{c}(t) \equiv (1 - \lambda) X^{c}_k +
\lambda X^{c}_{k+1}, \qquad \lambda\equiv\frac{t - t_k}{ t_{k+1}-t_{k}}.
\]
Likewise, the piecewise linear interpolants
$X^{f}(t)$ and $X^{a}(t)$
are defined on the fine timestep $[t_{k}, t_{k+\halfs}]$ as
\begin{eqnarray*}
X^{f}(t) &\equiv& (1 - \lambda) X^{f}_{k} +
\lambda X^{f}_{k+\halfs}, \qquad X^{a}(t) \equiv (1 -
\lambda) X^{a}_{k} + \lambda X^{a}_{k+\halfs},
\\
\lambda &\equiv& \frac{t - t_k}{t_{k+\halfs}-t_{k}}
\end{eqnarray*}
and there is a corresponding definition for the fine timestep
$[t_{k+\halfs}, t_{k+1}]$.

The proofs of the next two lemmas are in the \hyperref[app]{Appendix}, and the
theorem then follows directly.

%
\begin{lemma} \label{corodiff}
For all integers $p\geq2$, there exists a constant $K_p$ such that
\[
\max_{0\leq n<N} \mathbb{E} \bigl[ \bigl\| X^{f}_{n+\halfs}
- X^{a}_{n+\halfs} \bigr\|^p \bigr] \leq K_p
\Delta t^{p/2}.
\]
\end{lemma}

%
\begin{lemma} \label{coromain}
For all $p\geq2$, there exists a constant $K_p$ such that
\[
\max_{0\leq n<N} \mathbb{E} \bigl[ \bigl\llVert
\X^f_{n+\halfs} - X^c(t_{n+\halfs}) \bigr\rrVert
^p \bigr] \leq K_p \Delta t^p,
\]
where $X^c(t_{n+\halfs}) = \frac{1}{2}(X^c_n+X^c_{n+1})$ is
the midpoint value of the coarse path interpolant.
\end{lemma}

%
\begin{theorem} \label{thminterp}
For all $p\geq2$, there exists a constant $K_p$ such that
\begin{eqnarray*}
\sup_{0\leq t\leq T} \mathbb{E} \bigl[ \bigl\| X^{f}(t) -
X^{a}(t) \bigr\|^p \bigr] &\leq& K_p \Delta
t^{p/2},
\\
\sup_{0\leq t\leq T} \mathbb{E} \bigl[ \bigl\llVert \X^f(t)
- X^c(t) \bigr\rrVert ^p \bigr] &\leq& K_p
\Delta t^p,
\end{eqnarray*}
where $\X^f(t)$ is the average of the piecewise linear
interpolants $X^f(t)$ and $X^a(t)$.
\end{theorem}
%

\section{European and Asian payoffs}

\subsection{European options}

In the case of payoff which is a smooth function of the final
state $x(T)$, taking
$p = 2$ in Lemma \ref{lemmapayoff},
$p = 4$ in Lemma \ref{lemmadiff} and
$p = 2$ in Theorem \ref{thmmain},
immediately gives the result that the multilevel variance
\[
\mathbb{V} \bigl[ \tfrac{1}{2} \bigl( P\bigl(X_N^{f}
\bigr) + P\bigl(X_N^{a}\bigr) \bigr) - P
\bigl(X_N^c\bigr) \bigr]
\]
has an $O(\Delta t^2)$ upper bound. This matches the convergence rate
for the multilevel method for scalar SDEs using the standard first order
Milstein discretisation,
and is much better than the $O(\Delta t)$ convergence obtained with the
Euler--Maruyama discretisation.

However, very few financial payoff functions are twice differentiable
on the entire domain $\mathbb{R}^d$. A more typical 2D example is a call
option based on the minimum of two assets,
\[
P\bigl(x(T)\bigr) \equiv\max \bigl(0, \min\bigl( x_1(T),
x_2(T)\bigr) - K \bigr),
\]
which is piecewise linear, with a discontinuity in the gradient along
the three lines
$(s, K)$, $(K, s)$ and $(s, s)$ for $s \geq K$.

To handle such payoffs, we introduce a new assumption which bounds the
probability of the solution of the SDE having a value at time $T$ close
to such lines with discontinuous gradients, and then formulate a theorem
to show that the multilevel variance which results from using the
antithetic estimator has an upper bound which is almost $O(\Delta t^{3/2})$.

%
\begin{assp} \label{aspLip}
The payoff function $P \in C(\mathbb{R}^d, \mathbb{R})$ has a uniform
Lipschitz bound,
so that there exists a constant $L$ such that
\[
\bigl\llvert P(x) - P(y) \bigr\rrvert \leq L \llvert x - y \rrvert \qquad
\forall x, y \in\mathbb{R}^d
\]
and the first and second derivatives exist, are continuous and have
uniform bound $L$ at all points $x \notin K$, where $K$ is a set of
zero measure, and there exists a constant $c$ such that the probability
of the SDE solution $x(T)$, being within a neighbourhood of the set $K$,
has the bound
\[
\mathbb{P} \Bigl( \min_{y\in K} \bigl\| x(T) - y \bigr\| \leq\varepsilon
\Bigr) \leq c \varepsilon \qquad\forall \varepsilon> 0.
\]
\end{assp}
In a 1D context, Assumption \ref{aspLip} corresponds to an assumption
of a locally bounded density for $x(T)$.

%
\begin{theorem} \label{thmLip}
If the SDE satisfies the conditions of Assumption \ref{asgLip},
and the payoff satisfies Assumption \ref{aspLip}, then
\[
\mathbb{E} \bigl[ \bigl(\tfrac{1}{2} \bigl(P\bigl(X_N^{f}
\bigr) + P\bigl(X_N^{a}\bigr)\bigr) - P
\bigl(X_N^c\bigr) \bigr)^2 \bigr] = o\bigl(
\Delta t^{3/2-\delta}\bigr)
\]
for any $\delta>0$.
\end{theorem}

\begin{pf}
We start by noting that
\begin{eqnarray*}
&& \mathbb{E} \bigl[ \bigl( \tfrac{1}{2} \bigl(P\bigl(X_N^{f}
\bigr) + P\bigl(X_N^{a}\bigr)\bigr) - P
\bigl(X_N^c\bigr) \bigr)^{2} \bigr]
\\
&&\qquad  \leq
2 \mathbb{E} \bigl[ \bigl( \tfrac{1}{2} \bigl(P\bigl(X_N^{f}
\bigr) + P\bigl(X_N^{a}\bigr)\bigr) - P\bigl(
\X_N^f\bigr) \bigr)^{2} \bigr]
+ 2 \mathbb{E} \bigl[ \tfrac{1}{2} \bigl(P\bigl(\X_N^f
\bigr) - P\bigl(X_N^c\bigr) \bigr)^{2}
\bigr].
\end{eqnarray*}
The second term on the right-hand side has an $O(\Delta t^2)$ bound
due to the uniform Lipschitz bound for the payoff, together with
the result from Theorem \ref{thmmain} for $p=2$.

The objective now is to prove that the first term has a
$o(\Delta t^{3/2-\delta})$ bound for any $\delta>0$.
The analysis follows the approach used in \cite{ghm09}.
To prove this for a particular value of $\delta$, we define
$
\varepsilon= \Delta t^{1/2-\delta/2}$,
and consider the three events
\begin{eqnarray*}
A &\equiv& \Bigl\{\min_{y\in K}\bigl\| x(T) - y \bigr\| \leq\varepsilon
\Bigr\},
\\
B &\equiv& \bigl\{\bigl\| x(T) - X^{f}_N\bigr\| \geq
\tfrac{1}{2} \varepsilon \bigr\},
\\
C &\equiv& \bigl\{ \bigl\| X^{f}_N - X^{a}_N
\bigr\| \geq\tfrac{1}{2} \varepsilon \bigr\}.
\end{eqnarray*}

Using
$\mathbf{1}_{A}$ to indicate the indicator function for event $A$, and
$A^c$ to denote the complement of $A$, we have
\begin{eqnarray*}
&& \mathbb{E} \bigl[ \bigl( \tfrac{1}{2} \bigl(P\bigl(X_N^{f}
\bigr) + P\bigl(X_N^{a}\bigr)\bigr) - P\bigl(
\X_N^f\bigr) \bigr)^{2} \bigr]
\\
&&\qquad =
\mathbb{E} \bigl[ \bigl( \tfrac{1}{2} \bigl(P\bigl(X_N^{f}
\bigr) + P\bigl(X_N^{a}\bigr)\bigr) - P\bigl(
\X_N^f\bigr) \bigr)^{2} \mathbf{1}_{A \cup B \cup C}
\bigr]
\\
&&\quad\qquad{}+ \mathbb{E} \bigl[ \bigl( \tfrac{1}{2} \bigl(P\bigl(X_N^{f}
\bigr) + P\bigl(X_N^{a}\bigr)\bigr) - P\bigl(
\X_N^f\bigr) \bigr)^{2} \mathbf{1}_{A^c \cap B^c \cap C^c}\bigr].
\end{eqnarray*}

Looking at the first of the two terms on the right-hand side,
then H{\"o}lder's inequality gives
\begin{eqnarray*}
&& \mathbb{E} \bigl[ \bigl( \tfrac{1}{2} \bigl(P\bigl(X_N^{f}
\bigr) + P\bigl(X_N^{a}\bigr)\bigr) - P\bigl(
\X_N^f\bigr) \bigr)^{2} \mathbf{1}_{A \cup B \cup C}\bigr]
\\
&&\qquad\leq\mathbb{E} \bigl[ \bigl( \tfrac{1}{2} \bigl(P
\bigl(X_N^{f}\bigr) + P\bigl(X_N^{a}
\bigr)\bigr) - P\bigl(\X_N^f\bigr) \bigr)^{2p}
\bigr]^{1/p} \bigl( \mathbb{P}(A) + \mathbb{P}(B) + \mathbb{P}(C)
\bigr)^{1/q}
\end{eqnarray*}
for any $p,q\geq1$, with $p^{-1}+q^{-1}=1$.
The Markov inequality gives
\[
\mathbb{P}(B) \leq \mathbb{E} \bigl[ \bigl\llVert x(T) - X^{f}_N
\bigr\rrVert ^m \bigr] / \bigl(\tfrac
{1}{2}\varepsilon
\bigr)^m 
\]
for any $m\geq1$. Using the strong convergence property from
Lemma \ref{lemmastrong}, and the definition of $\varepsilon$, we can
take $m$ to be sufficiently large so that
\[
\frac{1}{2} m - \frac{1-\delta}{2} m > \frac{1-\delta}{2}
\]
and hence there exists a constant $c_1$ such that
$
\mathbb{P}(B) \leq c_1 \varepsilon$.
Using Lemma \ref{lemmadiff}, one can obtain a similar bound
$
\mathbb{P}(C) \leq c_2 \varepsilon$,
and then $q$ can be chosen sufficiently
close to 1 so that
\[
\bigl( \mathbb{P}(A) + \mathbb{P}(B) + \mathbb{P}(C) \bigr)^{1/q}
\leq(1 + c_1 + c_2)^{1/q} \Delta
t^{(1/2-\delta/2)/q} = o\bigl(\Delta t^{1/2-\delta}\bigr).
\]
Since
\[
\tfrac{1}{2} \bigl(P\bigl(X_N^{f}\bigr) + P
\bigl(X_N^{a}\bigr)\bigr) - P\bigl(\X_N^f
\bigr) = \tfrac{1}{2} \bigl(P\bigl(X_N^{f}\bigr) - P
\bigl(\X_N^f\bigr) \bigr) + \tfrac{1}{2} \bigl(P
\bigl(X_N^{a}\bigr) - P\bigl(\X_N^f
\bigr) \bigr),
\]
the uniform Lipschitz bound gives
\[
\mathbb{E} \bigl[ \bigl( \tfrac{1}{2} \bigl(P\bigl(X_N^{f}
\bigr) + P\bigl(X_N^{a}\bigr)\bigr) - P\bigl(
\X_N^f\bigr) \bigr)^{2p} \bigr]^{1/p}
\leq L^2 \mathbb{E} \bigl[ \bigl\llVert X_N^{f}
- X_N^{a} \bigr\rrVert ^{2p}
\bigr]^{1/p} \leq c_3 \Delta t
\]
for some constant $c_3$ due to Lemma \ref{lemmadiff}, and hence
\[
\mathbb{E} \bigl[ \bigl( \tfrac{1}{2} \bigl(P\bigl(X_N^{f}
\bigr) + P\bigl(X_N^{a}\bigr)\bigr) - P\bigl(
\X_N^f\bigr) \bigr)^{2} \mathbf{1}_{A \cup B \cup C}
\bigr] = o\bigl( \Delta t^{3/2 - \delta}\bigr).
\]

Lastly, we consider the second term
\[
\mathbb{E} \bigl[ \bigl( \tfrac{1}{2} \bigl(P\bigl(X_N^{f}
\bigr) + P\bigl(X_N^{a}\bigr)\bigr) - P\bigl(
\X_N^f\bigr) \bigr)^{2} \mathbf{1}_{A^c \cap B^c \cap C^c}
\bigr].
\]
Given\vspace*{-1pt} a path sample $\omega\in(B^c \cap C^c)$, if the straight
line between $X_N^{f}$ and $X_N^{a}$ contains a point $y\in K$,
then $\| y - X_N^{f} \|$ and $\| x(T) - X_N^{f} \|$ are both
less than $\varepsilon/2$, and hence $\| x(T) - y \| < \varepsilon$.

Thus, for a path sample $\omega\in(A^c \cap B^c \cap C^c)$, the
straight line between $X_N^{f}$ and $X_N^{a}$ does not contain
any points in $K$. It is therefore possible to perform a second order
truncated Taylor expansion as in the proof of Lemma \ref{lemmapayoff},
and deduce that there exists a constant $c_4$ such that
\[
\mathbb{E} \bigl[ \bigl( \tfrac{1}{2} \bigl(P\bigl(X_N^{f}
\bigr) + P\bigl(X_N^{a}\bigr)\bigr) - P\bigl(
\X_N^f\bigr) \bigr)^{2} \mathbf{1}_{A^c \cap B^c \cap C^c}
\bigr] \leq c_4 \mathbb{E} \bigl[ \bigl\llVert X_N^{f}
- X_N^{a} \bigr\rrVert ^4 \bigr],
\]
which has an $O(\Delta t^2)$ bound due to Lemma \ref{lemmadiff}.
\end{pf}

\subsection{Asian payoffs}

For an Asian option, the payoff depends on the average
\[
x_{\mathrm{ave}} \equiv T^{-1} \int_0^T
x(t) \,\mathrm{d}t.
\]
This can be approximated by integrating the appropriate piecewise linear
interpolant which gives
\begin{eqnarray*}
X^c_{\mathrm{ave}} &\equiv& T^{-1} \int
_0^T X^c(t) \,\mathrm{d}t =
N^{-1} \sum_{n=0}^{N-1}
\frac{1}{2} \bigl(X^c_{n} + X^c_{n+1}
\bigr),
\\
X^{f}_{\mathrm{ave}} &\equiv& T^{-1} \int
_0^T X^{f}(t) \,\mathrm{d}t =
N^{-1} \sum_{n=0}^{N-1}
\frac{1}{4} \bigl(X^{f}_{n} + 2 X^{f}_{n+\halfs}
+ X^{f}_{n+1}\bigr),
\\
X^{a}_{\mathrm{ave}} &\equiv& T^{-1} \int
_0^T X^{a}(t) \,\mathrm{d}t =
N^{-1} \sum_{n=0}^{N-1}
\frac{1}{4} \bigl(X^{a}_{n} + 2 X^{a}_{n+\halfs}
+ X^{a}_{n+1}\bigr).
\end{eqnarray*}

Due to H{\"o}lder's inequality,
\begin{eqnarray*}
\mathbb{E} \bigl[ \bigl\llVert X^{f}_{\mathrm{ave}} -
X^{a}_{\mathrm{ave}} \bigr\rrVert ^p \bigr] &\leq&
T^{-1} \int_0^T \mathbb{E} \bigl[
\bigl\llVert X^{f}(t) - X^{a}(t) \bigr\rrVert ^p
\bigr] \,\mathrm{d}t
\\
&\leq&\sup_{[0,T]} \mathbb{E} \bigl[ \bigl
\llVert X^{f}(t) - X^{a}(t) \bigr\rrVert ^p
\bigr]
\end{eqnarray*}
and similarly,
\[
\mathbb{E} \biggl[ \biggl\llVert \frac{1}{2}\bigl(X^{f}_{\mathrm{ave}}
+ X^{a}_{\mathrm{ave}}\bigr) - X^c_{\mathrm{ave}} \biggr
\rrVert ^p \biggr] \leq\sup_{[0,T]} \mathbb{E} \bigl[
\bigl\llVert \X^f(t) - X^c(t) \bigr\rrVert ^p
\bigr].
\]
Hence, if the Asian payoff is a smooth function of the average, then taking
$p = 2$ in Lemma \ref{lemmapayoff},
$p = 4$ in Corollary \ref{corodiff} and
$p = 2$ in Corollary \ref{coromain},
again gives a second order bound for the multilevel correction variance.

This analysis can be extended to include payoffs which are a smooth
function of a number of intermediate variables, each of which is a
linear functional of the path $x(t)$ of the form
\[
\int_0^T g^T(t) x(t) \mu(
\mathrm{d}t)
\]
for some vector function $g(t)$ and measure $\mu(\mathrm{d}t)$.
This includes weighted averages of $x(t)$ at a number of discrete
times, as well as continuously-weighted averages over the whole
time interval.

As with the European options, the analysis can also be extended to
payoffs which are Lipschitz functions of the average, and have
first and second derivatives which exist and are continuous and
uniformly bounded, except for a set of points~$K$ of zero measure.

%
\begin{assp} \label{aspLip2}
The payoff $P \in C(\mathbb{R}^d, \mathbb{R})$ has a uniform
Lipschitz bound,
so that there exists a constant $L$ such that
\[
\bigl\llvert P(x) - P(y) \bigr\rrvert \leq L \llvert x - y \rrvert \qquad
\forall x, y \in\mathbb{R}^d
\]
and the first and second derivatives exist, are continuous and have
uniform bound $L$ at all points $x \notin K$, where $K$ is a set of
zero measure, and there exists a constant $c$ such that the probability
of $x_{\mathrm{ave}}$ being within a neighbourhood of the set $K$
has the bound
\[
\mathbb{P} \Bigl( \min_{y\in K} \| x_{\mathrm{ave}} - y \| \leq
\varepsilon \Bigr) \leq c \varepsilon\qquad\forall \varepsilon> 0.
\]
\end{assp}

%
\begin{theorem} \label{thmLip2}
If the SDE satisfies the conditions of Assumption \ref{asgLip},
and the payoff satisfies Assumption \ref{aspLip2}, then
\[
\mathbb{E} \bigl[ \bigl(\tfrac{1}{2} \bigl(P\bigl(X_{\mathrm{ave}}^{f}
\bigr) + P\bigl(X_{\mathrm{ave}}^{a}\bigr)\bigr) - P
\bigl(X_{\mathrm{ave}}^c\bigr) \bigr)^2 \bigr] = o
\bigl(\Delta t^{3/2-\delta}\bigr)
\]
for any $\delta>0$.
\end{theorem}

\subsection{Nonasymptotic result}

The analysis above concerns the asymptotic behaviour of the multilevel
variance as $\Delta t\rightarrow0$. However, it is also worth noting
that since $X^{f}$ and $X^{a}$ have exactly the same distribution,
conditional on the\vadjust{\goodbreak} coarse path Brownian increments $\Delta W^c$, then
$P^{f} - P^c$ and $P^{a} - P^c$ are identically distributed,
and hence
%
\begin{eqnarray}\label{eqvaridentity}
\mathbb{V} \bigl[ \tfrac{1}{2}\bigl(P^{f} + P^{a}
\bigr) - P^c \bigr] &=& \mathbb{V} \bigl[ \tfrac{1}{2}
\bigl(P^{f} - P^c\bigr) + \tfrac{1}{2}
\bigl(P^{a} - P^c\bigr) \bigr]
\nonumber\\[-8pt]\\[-8pt]
&=&  \tfrac{1}{2}(1 + \rho) \mathbb{V}\bigl[P^{f} - P^c\bigr], \nonumber
\end{eqnarray}
where $\rho$ is the correlation between the $P^{f} - P^c$ and
$P^{a} - P^c$.
Thus, regardless of the size of the timestep, the variance of the
antithetic estimator cannot be larger than the variance of the standard
estimator, and could be significantly smaller if $\rho$~is negative.
What the asymptotic analysis shows is that $\rho\rightarrow-1$ as
$\Delta t \rightarrow0$.

\section{Numerical experiments}

In this section we present numerical tests in which we compare classical
Monte Carlo (MC), standard MLMC and antithetic MLMC estimators.
We consider the Clark--Cameron SDEs and Heston's stochastic volatility
model with both smooth and non-smooth payoffs. We will see that
in all cases the antithetic MLMC variance is significantly smaller than
the standard MLMC variance on all levels of approximation.

\subsection{Clark--Cameron SDEs}

The first set of results in Figure~\ref{figsin}
is for the Clark--Cameron SDEs with initial conditions
$x_{1}(0) = x_{2}(0) = 0$, final time $T=1$, and smooth payoff
$
P = \cos(x_1(T))$.

%
\begin{figure}

\includegraphics{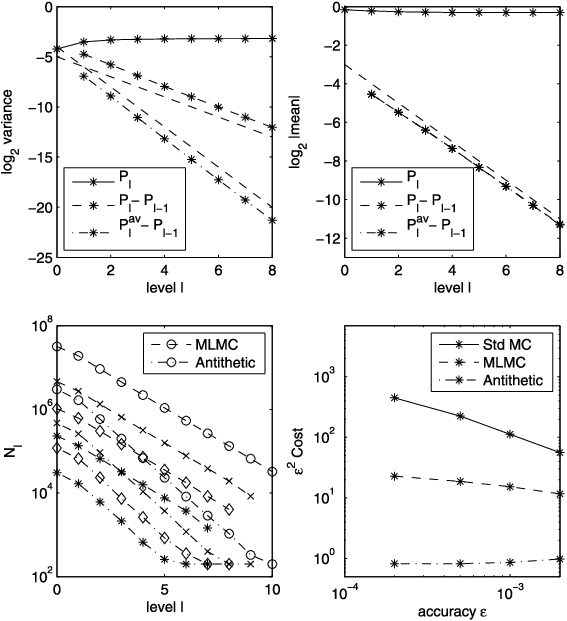}

\caption{Clark--Cameron SDEs with smooth payoff $P=\cos
(x_1(T))$.}\label{figsin}
\end{figure}

The top left plot shows the behaviour of the variance as a function
of the level of approximation, so that $\Delta t = 2^{- \ell}$.
These values were estimated using $10^6$ samples, so the sampling error
is very small.
The solid line is the variance of the standard Monte Carlo estimator
which varies very little with level. The dashed line is the usual MLMC
estimator $P^{f}_{\ell}-P^c_{\ell-1}$, and the accompanying reference
line with slope $-$1 confirms its expected first order convergence.
The dot-dash line is for the antithetic estimator
$\frac{1}{2}(P^{f}_{\ell} +P^{a}_{\ell}) - P^c_{\ell-1}$, and its
accompanying reference line with slope $-$2 confirms its second order
convergence. Note also that even on level $\ell=1$ in which the multilevel
estimator comes from the difference between simulations with 2 timesteps
(on level 1) and 1 timestep (on level 0), the antithetic estimator has a
variance which is roughly a factor 4 smaller than the standard MLMC
estimator.

The top right plot shows the mean value for the multilevel correction.
As expected the standard MLMC and antithetic MLMC estimator have exactly
the same expected value, and it converges at first order as indicated by
the reference line with slope $-$1.

The bottom right plot shows the dependence of the computational complexity
$C$ (defined as the total number of random numbers generated) as a function
of the desired accuracy $\epsilon$. Because of Theorem \ref{thcomplexity}
the plot is of $\epsilon^2 C$ versus $\epsilon$, because we expect
to see that
$\epsilon^2 C$ is only weakly dependent on $\epsilon$ for the
standard MLMC
and independent of $\epsilon$ for the antithetic MLMC. For the\vadjust{\goodbreak}
standard Monte
Carlo method, theory predicts that $\epsilon^{2} C$ should be
proportional to
the number of timesteps on the finest level, which in turn is roughly
proportional to $\epsilon^{-1}$ due to the first order weak
convergence order.
We see that computational complexity of the antithetic MLMC is much lower
than for the standard MLMC.

Further insight into the complexity cost is provided by the bottom left plot.
Each point in the bottom right complexity plot corresponds to a line in
the bottom left plot, showing the number of samples taken on each level of
the multilevel approximation. Lines with the same plotting symbol
correspond to the same desired accuracy $\epsilon$, with the upper
line being for
the MLMC estimator, and the lower line being for the antithetic estimator.

There are several points to note in this plot. The first is that for a
given accuracy, the number of samples on each level decays rapidly as
$\ell$
increases. This follows the prescription given in \cite{giles08} in
which the optimal number of samples on each level is proportional to
$\sqrt{V_l/C_l}$ where $V_l$ is the multilevel variance and $C_l$ is
the cost of a single sample on level $\ell$. The constant of proportionality
is chosen so that the overall variance
$
\sum_{\ell=0}^L N_\ell^{-1} V_\ell
$
is less than $\epsilon^2/2$. Because the antithetic
variance converges to zero more rapidly, the slope of the antithetic lines
is slightly greater than the slope of the standard MLMC lines.

The next point to note is that the lines with circular symbols (which
correspond to the tightest accuracy specification $\epsilon=10^{-4}$) extend
to level $\ell= 10$, while the other lines terminate at lower levels.
This is again following the prescription in \cite{giles08} in which the
mean square error is brought below $\epsilon^2$ by ensuring that the
square of the bias is also below $\epsilon^2/2$, like the total variance.
Using a simple heuristic to estimate the remaining discretisation bias,
because of the first order weak convergence, fewer approximation levels
are required when $\epsilon$ is larger.

The final observation to be made is that the antithetic line lies well below
the standard MLMC line for the same accuracy $\epsilon$. This is what produces
the overall computational savings shown in the bottom right plot. However,
on level $0$ the two are using exactly the same estimator, so why does the
antithetic estimator use fewer samples than the standard MLMC on level $0$?
The answer is that both have a variance budget of $\epsilon^2/2$ to
be spread
over all of the levels in the way which minimises the total computational
cost \cite{giles08}. In the standard MLMC case, this budget is spread
fairly evenly over the different levels, but in the antithetic case most
of the budget is allocated to level $0$ (because the estimator variance
decays so rapidly on the higher levels) and so fewer samples are required
on level $0$.

The next set of results in Figure~\ref{figCall} are for the same
Clark--Cameron SDE but with the Lipschitz payoff
\[
P = \max\bigl(x_1(T),0\bigr).
\]
The same comments as before apply to the plots in this figure. The
only difference is that the lower of the two reference lines in the top
left plot has slope $-$1.5, confirming that the multilevel variance
is $O(\Delta t^{3/2})$ rather than $O(\Delta t^{2})$ because of the
discontinuity in the first derivative of the payoff function.
Apart from that, the results are very similar with the antithetic estimator
have a much lower variance on all grid levels, and overall giving a much
reduced computational cost.

%
\begin{figure}

\includegraphics{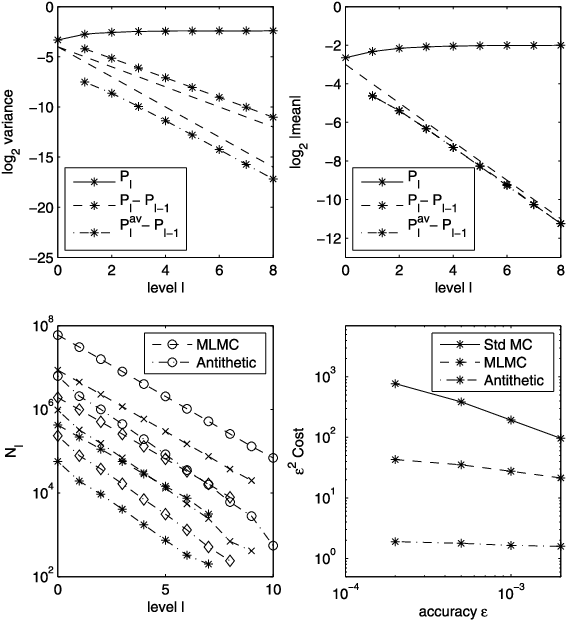}

\caption{Clark--Cameron SDEs with $P = \max(x_1(T),0)$.}\label{figCall}
\end{figure}

\subsection{Heston stochastic volatility model}

The Heston model \cite{heston93}, which is an asset price model
with stochastic volatility, is one of the most popular SDEs in finance
\begin{eqnarray*}
\mathrm{d} s(t) &=& r s(t) \,\mathrm{d}t + \sqrt{v(t)} s(t) \,\mathrm
{d}w_{1}(t), \qquad s(0)>0,
\\
\mathrm{d}v(t) &=& \kappa\bigl(\theta- v(t)\bigr) \,\mathrm{d}t + \sigma\sqrt
{v(t)} \,\mathrm{d}w_{2}(t), \qquad v(0)>0,
\end{eqnarray*}
where $\mathbb{E}[w_{1}(t)w_{2}(t)]=0$, $r>0$ and $2\kappa\theta\geq
\sigma^2$,
ensuring that the zero boundary is not attainable for the volatility process.
Due to the nonlinearity of the diffusion coefficient in the price process
$s(t)$ we work with log-Heston model
\begin{eqnarray*}
\mathrm{d}\log\bigl(s(t)\bigr) &=& \bigl(r - \tfrac{1}{2} v(t)\bigr) \,
\mathrm{d}t + \sqrt{v(t)} \,\mathrm{d}w_{1}(t),
\\
\mathrm{d}v(t) &=& \kappa\bigl(\theta- v(t)\bigr) \,\mathrm{d}t + \sigma
\sqrt{v(t)} \,\mathrm{d}w_{2}(t).
\end{eqnarray*}

%
\begin{figure}

\includegraphics{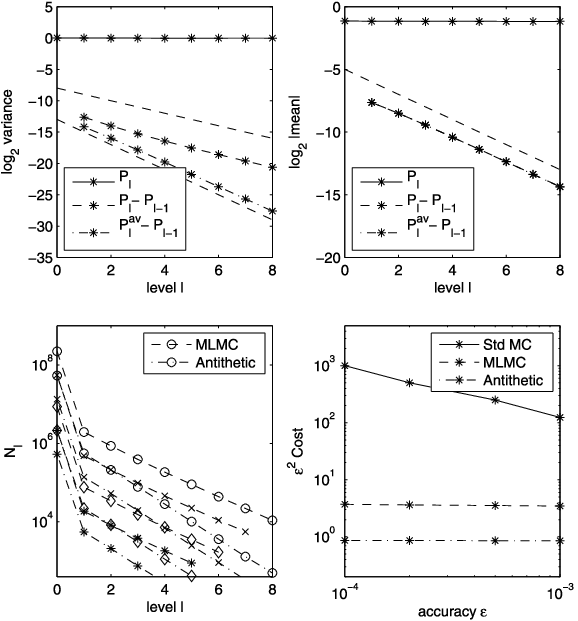}

\caption{Heston SDEs with $P=x(T)$.} \label{figHsmooth}
\end{figure}

Although the coefficients of the volatility process $\{v(t)\}_{t\geq
0}$ are
not Lipschitz continuous, and hence the assumptions imposed in the current
paper are not satisfied, the numerical tests show that the antithetic MLMC
performs very well. To approximate the volatility process we use a drift
implicit Milstein scheme that preserves the positivity of the original SDE,
and has a good strong convergence property recently established by
Neuenkirch and Szpruch in \cite{ns12}. Hence, the Milstein
scheme for Heston's stochastic volatility model with the L{\'e}vy area term
set to zero is given by
\begin{eqnarray*}
\log(S_{n+1}) &=& \log(S_{n}) + \bigl(r -
\tfrac{1}{2} V_{n}\bigr) \Delta t + \sqrt{V_{n}}
\Delta w_{1,n} + \tfrac{1}{4}\sigma\Delta w_{1,n}\Delta
w_{2,n},
\\
V_{n+1} &=& V_{n} + \kappa(\theta- V_{n+1})
\Delta t + \sigma\sqrt {V_{n}} \Delta w_{2,n} +
\tfrac{1}{2}\sigma^4\bigl(\Delta w_{2,n}^2
- \Delta t\bigr).
\end{eqnarray*}
For the simulation studies we choose $s_{0}=v_{0}=1$, $r=0.05$, $T=1$ and
$\kappa= 0.5$, $\theta= 0.9$, $\sigma=0.05$
in order to ensure the Feller boundary condition for the volatility process.

%
\begin{figure}

\includegraphics{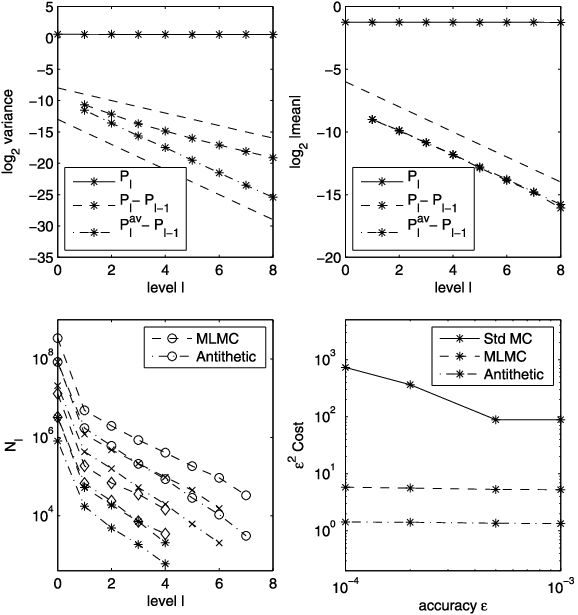}

\caption{Heston SDEs with $P = \max(s(T) - 1,0)$.} \label{figHcall}
\end{figure}

Figure~\ref{figHsmooth} presents our results for the smooth payoff
$P = x(T)$. The four plots have a similar structure to the results
of the Clark--Cameron application. The two reference lines in the top
left plot again have slopes $-$1 and $-$2, confirming that the antithetic
MLMC variance is $O(\Delta t^2)$, whereas the standard MLMC variance is
$O(\Delta t)$. The top right plot shows that the weak discretisation
error is again first order.

The bottom right plot shows that computational savings of the antithetic
MLMC compared to the standard MLMC are not as great as for the Clark--Cameron
application. The reason for this can be seen in the bottom left plot.
The multilevel variance on levels 1 and above is much smaller than the
variance on level 0, where both methods use the same estimator. Hence,
in both cases much of the computational effort is expended on the coarsest
level and so the benefits of the antithetic treatment are reduced.

The final results in Figure~\ref{figHcall} are for the same Heston
SDEs but with the call option payoff $P = \max(s(T)-1,0)$. The steeper
of the two reference lines
in the top left plot has a slope of $-$2, not the $-$1.5 used for the
Clark--Cameron case for the non-smooth payoff. This indicates that the
antithetic variance is $O(\Delta t^2)$, not the $O(\Delta t^{3/2})$ predicted
by the analysis. It is possible that there is indeed an $O(\Delta t^{3/2})$
component to the error, but that the corresponding coefficient is so small
that it does not become apparent until much smaller values of $\Delta t$.
Other than this, the results are very similar to the previous case.

\section{Conclusions}

In this paper we have constructed a new antithetic multilevel Monte Carlo
estimator for multi-dimensional SDEs, with a variance which is
$O(\Delta t^2)$
when the payoff function is smooth, and almost an $O(\Delta t^{3/2})$
when it
is Lipschitz and piecewise smooth. The algorithm is very easy to implement;
all that is required is to calculate a second fine path for which the odd
and even Brownian increments are swapped.\vadjust{\goodbreak}

In the European and Asian payoff cases considered in this paper, it
reduces the computational complexity for an $\epsilon$ root-mean-square
error to $O(\epsilon^{-2})$, compared to $O(\epsilon^{-2} (\log
{(1/ \epsilon)})^2)$ for
the multilevel method using the Euler--Maruyama discretisation, and
$O(\epsilon^{-3})$ for the standard Monte Carlo method. Furthermore, by
ensuring that the dominant computational effort is on the coarsest
levels (since $\beta>1$), it is now feasible to obtain further
improvements using quasi-Monte Carlo techniques \cite{gw09}.

In a future paper, we will extend the analysis to cover digital and
barrier options. The improvements from an extended version of the
antithetic treatment are then more substantial, improving the complexity
from $O(\epsilon^{-5/2})$ to approximately $O(\epsilon^{-2})$.

\begin{appendix}
\section*{Appendix: Proof of main results}\label{app}

\subsection{Proof of Lemma \texorpdfstring{\protect\ref{lemmadiff}}{4.6}}
Conditional on the Brownian increments $\Delta w$ for the coarse path $X^c$,
the Brownian increments for $X^{f}$ and $X^{a}$ have exactly the same
distribution, and therefore $X^{a}_n - X^{c}_n$ has exactly the
same distribution as $X^{f}_n - X^{c}_n$. Hence we obtain, using
inequality (\ref{eqineq}),
\begin{eqnarray*}
&& \mathbb{E} \Bigl[ \max_{0\leq n\leq N} \bigl\| X^{f}_n
- X^{a}_n \bigr\|^p \Bigr]
\\
&&\qquad \leq  2^{p-1}\Bigl( \mathbb{E} \Bigl[ \max_{0\leq n\leq N} \bigl\| X^{f}_n
- X^{c}_n \bigr\|^p \Bigr] + \mathbb{E} \Bigl[
\max_{0\leq n\leq N} \bigl\| X^{a}_n -
X^{c}_n \bigr\|^p \Bigr] \Bigr)
\\
&&\qquad =  2^p \mathbb{E} \Bigl[ \max_{0\leq n\leq N} \bigl\|
X^{f}_n - X^{c}_n
\bigr\|^p \Bigr]
\\
&&\qquad \leq 2^{2p-1} \Bigl( \mathbb{E} \Bigl[ \max_{0\leq n\leq N}
\bigl\| X^{f}_n - x(t_n) \bigr\|^p \Bigr] +
\mathbb{E} \Bigl[ \max_{0\leq n\leq N} \bigl\| X^{c}_n
- x(t_n) \bigr\|^p \Bigr] \Bigr).
\end{eqnarray*}
The desired result then follows from the strong convergence property in
Lemma~\ref{lemmastrong}.

\subsection{Proof of Lemma \texorpdfstring{\protect\ref{lemmaeqn1}}{4.7} and Corollary
\texorpdfstring{\protect\ref{coroeqn1}}{4.8}}
Combining the two equations in (\ref{eqeqn1}), and using the identity
\[
\Delta w_{j,n} \Delta w_{k,n} = (\delta w_{j,n} +
\delta w_{j,n+\halfs}) (\delta w_{k,n} + \delta w_{k,n+\halfs})
\]
together with the definition of $h_{ijk}$ in (\ref{eqh}) gives,
after considerable re-arrange\-ment,
\begin{eqnarray*}
X^{f}_{i,n+1} &=& X^{f}_{i,n} +
f_i\bigl(X^{f}_n\bigr) \Delta t + \sum
_{j=1}^D g_{ij}
\bigl(X^{f}_n\bigr) \Delta w_{j,n}
\\
&&{} + \sum_{j,k=1}^D h_{ijk}
\bigl(X^{f}_n\bigr) ( \Delta w_{j,n} \Delta
w_{k,n} - \Omega_{jk} \Delta t )
\\
&&{} - \sum_{j,k=1}^D h_{ijk}
\bigl(X^{f}_n\bigr) ( \delta w_{j,n} \delta
w_{k,n+\halfs} - \delta w_{k,n} \delta w_{j,n+\halfs} )
\\
&&{} + R_{i,n} + M^{(2)}_{i,n} + M^{(3)}_{i,n},
\end{eqnarray*}
where
\begin{eqnarray*}
R_{i,n} & = & \bigl(f_i\bigl(X^{f}_{n+\halfs}
\bigr)-f_i\bigl(X^{f}_{n}\bigr) \bigr) \Delta
t / 2,
\\
M^{(2)}_{i,n} & = & \sum_{j=1}^D
\Biggl(g_{ij}\bigl(X^{f}_{n+\halfs}
\bigr)-g_{ij}\bigl(X^{f}_{n}\bigr) - 2 \sum
_{k=1}^D h_{ijk}
\bigl(X^{f}_{n}\bigr) \delta w_{k,n} \Biggr)
\delta w_{j,n+\halfs},
\\
M^{(3)}_{i,n} & = & \sum_{j,k=1}^D
\bigl( h_{ijk}\bigl(X^{f}_{n+\halfs}\bigr) -
h_{ijk}\bigl(X^{f}_n\bigr) \bigr) ( \delta
w_{j,n+\halfs} \delta w_{k,n+\halfs} - \Omega_{jk} \Delta t / 2
).
\end{eqnarray*}

Considering $R_n$, a Taylor expansion gives
\begin{eqnarray*}
&& f_i\bigl(X^{f}_{n+\halfs}\bigr)-f_i
\bigl(X^{f}_{n}\bigr)
\\
&&\qquad = \sum_{j=1}^d \frac{\partial f_i}{\partial x_j}\bigl(X^{f}_{n}
\bigr) \bigl( X^{f}_{j,n+\halfs}- X^{f}_{j,n}
\bigr)
\\[-1pt]
&&\quad\qquad{} + \frac{1}{2} \sum_{j,k=1}^d
\frac{\partial^2 f_i}{\partial
x_j\,\partial x_k}(\xi_1) \bigl( X^{f}_{j,n+\halfs}-
X^{f}_{j,n} \bigr) \bigl( X^{f}_{k,n+\halfs}-
X^{f}_{k,n} \bigr)
\end{eqnarray*}
for some $\xi_1$ which lies on the line between $X^{f}_{n}$ and
$X^{f}_{n+\halfs}$.
Hence, $R_n$ can be split into two parts, $R_n = M^{(1)}_n + N_n$,
where
\[
M^{(1)}_{i,n} = \sum_{j=1}^d
\sum_{k=1}^D \frac{\partial
f_i}{\partial x_j}
\bigl(X^{f}_{n}\bigr) g_{jk}\bigl(X^{f}_n
\bigr) \delta w_{k,n} \Delta t/2,
\]
%
and
\begin{eqnarray*}
N_{i,n} &=& \sum_{j=1}^d
\frac{\partial f_i}{\partial x_j}\bigl(X^{f}_{n}\bigr) \Biggl(
f_j\bigl(X^{f}_n\bigr) \Delta t / 2
\\[-2pt]
&&\hspace*{64pt}{} + \sum
_{k,l=1}^D h_{jkl}
\bigl(X^{f}_n\bigr) ( \delta w_{k,n}\delta
w_{l,n} - \Omega_{kl}\Delta t /2 ) \Biggr) \Delta t / 2
\\
&&{} + \frac{1}{2} \sum_{j,k=1}^d
\frac{\partial^2 f_i}{\partial
x_j\,\partial x_k}(\xi_1) \bigl( X^{f}_{j,n+\halfs}-
X^{f}_{j,n} \bigr) \bigl( X^{f}_{k,n+\halfs}-
X^{f}_{k,n} \bigr) \Delta t / 2.
\end{eqnarray*}

Considering $M^{(2)}_n$, a Taylor expansion gives
\begin{eqnarray*}
&& g_{ij}\bigl(X^{f}_{n+\halfs}\bigr)-g_{ij}
\bigl(X^{f}_{n}\bigr)
\\
&&\qquad = \sum
_{k=1}^d \frac{\partial g_{ij}}{\partial x_k}\bigl(X^{f}_{n}
\bigr) \bigl( X^{f}_{k,n+\halfs}- X^{f}_{k,n}
\bigr)
\\
&&\quad\qquad{} + \frac{1}{2} \sum_{k,l=1}^d
\frac{\partial^2
g_{ij}}{\partial x_k\,\partial x_l}(\xi_2) \bigl( X^{f}_{k,n+\halfs}-
X^{f}_{k,n} \bigr) \bigl( X^{f}_{l,n+\halfs}-
X^{f}_{l,n} \bigr)
\end{eqnarray*}
for some $\xi_2$ on the line between $X^{f}_{n}$ and $X^{f}_{n+\halfs}$,
and therefore
\begin{eqnarray*}
M^{(2)}_{i,n} & = & \sum_{j=1}^D
\sum_{k=1}^d \frac{\partial g_{ij}}{\partial x_k}
\bigl(X^{f}_{n}\bigr) \Biggl( f_k
\bigl(X^{f}_{n}\bigr) \Delta t/2
\\[-4pt]
&&\hspace*{81pt}{} +\hspace*{-0.3pt} \sum
_{l,m=1}^D h_{klm}\hspace*{-0.3pt} \bigl(X^{f}_{n}
\bigr) ( \delta w_{l,n}\delta w_{m,n} - \Omega_{lm}
\Delta t /2 ) \Biggr) \delta w_{j,n+\halfs}
\\
&&{} + \frac{1}{2} \sum_{j=1}^D \sum
_{k,l=1}^d \frac{\partial^2 g_{ij}}{\partial x_k\,\partial x_l}(
\xi_2) \bigl( X^{f}_{k,n+\halfs}- X^{f}_{k,n}
\bigr) \bigl( X^{f}_{l,n+\halfs}- X^{f}_{l,n}
\bigr) \delta w_{j,n+\halfs}.
\end{eqnarray*}

Finally, considering $M^{(3)}_n$ we have
\begin{eqnarray*}
M^{(3)}_{i,n} & = & \sum_{j,k=1}^D
\bigl( h_{ijk}\bigl(X^{f}_{n+\halfs}\bigr) -
h_{ijk}\bigl(X^{f}_n\bigr) \bigr) ( \delta
w_{j,n+\halfs} \delta w_{k,n+\halfs} - \Omega_{jk} \Delta t / 2 )
\\
& = & \sum_{j,k=1}^D \sum
_{l=1}^d \frac{\partial h_{ijk}}{\partial
x_l}(\xi_3)
\bigl( X^{f}_{l,n+\halfs}- X^{f}_{l,n} \bigr) (
\delta w_{j,n+\halfs} \delta w_{k,n+\halfs} - \Omega_{jk} \Delta
t / 2 )
\end{eqnarray*}
for some $\xi_3$ on the line between $X^{f}_{n}$ and $X^{f}_{n+\halfs}$.

Setting $M^{f}_n \equiv M^{(1)}_n + M^{(2)}_n + M^{(3)}_n$,
it is clear that $\mathbb{E}[M^{f}_n | \mathcal{F}_n]=0$ since
$\delta w_n$ is independent of $X^{f}_n$, and
$\delta w_{n+\halfs}$ is independent of $X^{f}_n$ and $X^{f}_{n+\halfs}$.

All that remains is to bound the magnitude of
$\mathbb{E}[ \|M^{f}_n\|^p]$ and $\mathbb{E}[ \|N^{f}_n\|^p]$.
Looking at two of the terms in $M^{(2)}_{i,n}$, for example,
the uniform bound on the first derivatives of $g$, together with the
fact that $\delta w_{n+\halfs}$ is independent of both $X^{f}_n$ and
$\delta w_{n}$ leads to
\begin{eqnarray*}
&& \mathbb{E} \biggl[ \biggl\llvert \frac{\partial g_{ij}}{\partial x_k}\bigl(X^{f}_{n}
\bigr) h_{klm}\bigl(X^{f}_{n}\bigr) \delta
w_{l,n} \delta w_{m,n} \delta w_{j,n+\halfs} \biggr\rrvert
^p \biggr]
\\
&&\qquad\leq L^p \mathbb{E} \bigl[ \bigl\llvert h_{klm}
\bigl(X^{f}_{n}\bigr) \bigr\rrvert ^p \bigr]
\mathbb{E} \bigl[\llVert \delta w_{n} \rrVert
^{2p} \bigr] \mathbb{E} \bigl[ \llVert \delta w_{n+\halfs} \rrVert
^p \bigr]
\end{eqnarray*}
and the uniform bound on the second derivatives of $g$, together with the
fact that $\delta w_{n+\halfs}$ is independent of both $X^{f}_n$ and
$X^{f}_{n+\halfs}$ leads to
\begin{eqnarray*}
&& \mathbb{E} \biggl[ \biggl\llvert \frac{\partial^2 g_{ij}}{\partial x_k\,\partial x_l}(\xi_2) \bigl(
X^{f}_{k,n+\halfs}- X^{f}_{k,n} \bigr) \bigl(
X^{f}_{l,n+\halfs}- X^{f}_{l,n} \bigr) \delta
w_{j,n+\halfs} \biggr\rrvert ^p \biggr]
\\
&&\qquad\leq L^p \mathbb{E} \bigl[ \bigl\llVert X^{f}_{n+\halfs}-
X^{f}_{n} \bigr\rrVert ^{2p} \bigr] \mathbb{E}
\bigl[ \llVert \delta w_{n+\halfs} \rrVert ^p \bigr].
\end{eqnarray*}


Combining the uniform bound on
$\mathbb{E} [ \llvert  h_{ijk}(X^{f}_{n}) \rrvert ^{2p}  ]$
from Corollary \ref{corostrong}
with the bounds from Lemma \ref{lemonestep},
and standard results for the moments of Brownian increments,
gives the required $O(\Delta t^{3p/2})$ bound for each of the
two terms considered.

Deriving similar bounds for the other terms in
$M^{f}$ and $N^{f}$, and combining them using (\ref{eqineq}),
eventually gives the desired bounds for both
$\mathbb{E}[ \|M^{f}_n\|^p]$ and $\mathbb{E}[ \|N^{f}_n\|^p]$.

The proof is almost exactly the same for Corollary \ref{coroeqn1}.
The sign change in the second line of the equation in the statement
of the corollary is due to the swapping of the Brownian increments for
the first and second halves of the timestep.

\subsection{Proof of Lemma \texorpdfstring{\protect\ref{lemmaeqn2}}{4.9}}
Recalling that $\X^f = \frac{1}{2}(X^{f} + X^{a})$,
taking the average of the results from Lemma \ref{lemmaeqn1} and
Corollary \ref{coroeqn1} gives
\begin{eqnarray*}
\X^f_{i,n+1} &=& \X^f_{i,n} +
f_i\bigl(\X^f_n\bigr) \Delta t + \sum
_{j=1}^D g_{ij}\bigl(
\X^f_n\bigr) \Delta w_{j,n}
\\
&&{} + \sum
_{j,k=1}^D h_{ijk}\bigl(\X^f_n
\bigr) (\Delta w_{j,n}\Delta w_{k,n} - \Omega_{jk}
\Delta t )
\\
&&{} + \frac{1}{2} \bigl( M^{f}_{i,n} +
N^{f}_{i,n} + M^{a}_{i,n} +
N^{a}_{i,n} \bigr) + M^{(1)}_{i,n} +
M^{(2)}_{i,n} + M^{(3)}_{i,n} +
N^{(1)}_{i,n},
\end{eqnarray*}
where
\begin{eqnarray*}
N^{(1)}_{i,n} &=& \bigl( \tfrac{1}{2}\bigl(
f_i\bigl(X^{f}_n\bigr) + f_i
\bigl(X^{a}_n\bigr) \bigr) - f_i\bigl(
\X^f_n\bigr) \bigr) \Delta t,
\\
M^{(1)}_{i,n} &=& \sum_{j=1}^D
\biggl( \frac{1}{2}\bigl( g_{ij}\bigl(X^{f}_n
\bigr) + g_{ij}\bigl(X^{a}_n\bigr) \bigr) -
g_{ij}\bigl(\X^f_n\bigr) \biggr) \Delta
w_{j,n},
\\
M^{(2)}_{i,n} &=& \sum_{j,k=1}^D
\biggl( \frac{1}{2}\bigl( h_{ijk}\bigl(X^{f}_n
\bigr) + h_{ijk}\bigl(X^{a}_n\bigr) \bigr) -
h_{ijk}\bigl(\X^f_n\bigr) \biggr) ( \Delta
w_{j,n} \Delta w_{k,n} - \Omega_{jk} \Delta t ),
\\
M^{(3)}_{i,n} &=& \sum_{j,k=1}^D
\frac{1}{2} \bigl( h_{ijk}\bigl(X^{f}_n
\bigr) - h_{ijk}\bigl(X^{a}_n\bigr) \bigr) (
\delta w_{j,n} \delta w_{k,n+\halfs} - \delta w_{k,n} \delta
w_{j,n+\halfs} ).
\end{eqnarray*}
Setting
\begin{eqnarray*}
M_n &=& \tfrac{1}{2} \bigl( M^{f}_{n} +
M^{a}_{n} \bigr) + M^{(1)}_n +
M^{(2)}_n + M^{(3)}_n,\qquad
N_n = \tfrac{1}{2} \bigl( N^{f}_{n} +
N^{a}_{n} \bigr) + N^{(1)}_n,
\end{eqnarray*}
it is clear that $\mathbb{E}[M_n | \mathcal{F}_n]=0$, and
all that remains is to bound the magnitude of
$\mathbb{E}[ \|M_n\|^p]$ and $\mathbb{E}[ \|N_n\|^p]$.
By performing second order Taylor series expansions for $f(x)$ and $g(x)$,
and first order expansions for $h(x)$, all about $\X^f_n$, we obtain
\begin{eqnarray*}
N^{(1)}_{i,n} &=& \frac{1}{16} \sum
_{j,k=1}^d \biggl(\frac{\partial^2 f_i}{\partial x_j\,\partial x_k}(
\xi_1) + \frac{\partial^2 f_i}{\partial x_j\,\partial x_k}(\xi_2) \biggr)
\bigl(X^{f}_{j,n} - X^{a}_{j,n}\bigr)
\bigl(X^{f}_{k,n} - X^{a}_{k,n}\bigr)
\Delta t,
\\
M^{(1)}_{i,n} &=& \frac{1}{16} \sum
_{j=1}^D \sum_{k,l=1}^d
\biggl(\frac{\partial^2 g_{ij}}{\partial x_k\,\partial x_l}(\xi_3) + \frac{\partial^2 g_{ij}}{\partial x_k\,\partial x_l}(
\xi_4) \biggr)
\\
&&\hspace*{50pt}{}\times  \bigl(X^{f}_{k,n} -
X^{a}_{k,n}\bigr) \bigl(X^{f}_{l,n} -
X^{a}_{l,n}\bigr) \Delta w_{j,n},
\\
M^{(2)}_{i,n} &=& \frac{1}{4} \sum
_{j,k=1}^D \sum_{l=1}^d
\biggl(\frac{\partial h_{ijk}}{\partial x_l}(\xi_5) - \frac{\partial h_{ijk}}{\partial x_l}(
\xi_6) \biggr)
\\
&&\hspace*{45pt}{}\times  \bigl(X^{f}_{l,n} -
X^{a}_{l,n}\bigr) ( \Delta w_{j,n} \Delta
w_{k,n} - \Omega_{jk} \Delta t ),
\\
M^{(3)}_{i,n} &=& \frac{1}{4} \sum
_{j,k=1}^D \sum_{l=1}^d
\biggl(\frac{\partial h_{ijk}}{\partial x_l}(\xi_7) + \frac{\partial h_{ijk}}{\partial x_l}(
\xi_8) \biggr)
\\
&&\hspace*{45pt}{}\times \bigl(X^{f}_{l,n} -
X^{a}_{l,n}\bigr) ( \delta w_{j,n} \delta
w_{k,n+\halfs} - \delta w_{k,n} \delta w_{j,n+\halfs} )
\end{eqnarray*}
for some $\xi_1, \xi_3, \xi_5, \xi_7$ between $\X^f_n$ and $X^{f}_n$,
and $\xi_2, \xi_4, \xi_6, \xi_8$ between $\X^f_n$ and~$X^{a}_n$.

Using the same arguments as in the final part of the proof of Lemma
\ref{lemmaeqn1}, together with the bounds on
$\mathbb{E}[ \|M^{f}_n\|^p]$,
$\mathbb{E}[ \|M^{a}_n\|^p]$,
$\mathbb{E}[ \|N^{f}_n\|^p]$ and
$\mathbb{E}[ \|N^{a}_n\|^p]$,
leads to the required bounds for the moments of $M_n$ and $N_n$.

\subsection{Proof of Theorem \texorpdfstring{\protect\ref{thmmain}}{4.10}}
If we define
$\displaystyle S_n = \mathbb{E} [ \max_{m\leq n} \| \X^f_m
- X^c_m \|^p  ]$,
then inequality (\ref{eqineq}) gives
%
\begin{equation}
\label{eqmax} S_n \leq d^{p-1} \sum
_{i=1}^d \mathbb{E} \Bigl[ \max_{m\leq n}
\bigl\llvert \X^f_{i,m} - X^c_{i,m}
\bigr\rrvert ^p \Bigr].
\end{equation}

Taking the difference between the equation in Lemma \ref{lemmaeqn2} and
equation (\ref{eqreducedMilstein}), and summing over the first $m$ timesteps,
we obtain
\begin{eqnarray*}
\X^f_{i,m} - X^c_{i,m} &=& \sum
_{l=0}^{m-1} \bigl( f_i\bigl(
\X^f_{i,l}\bigr) - f_i\bigl(X^c_{i,l}
\bigr) \bigr) \Delta t
\\
&&{} +\sum_{l=0}^{m-1} \sum
_{j=1}^D \bigl( g_{ij}\bigl(\X
^f_{i,l}\bigr)-g_{ij}\bigl(X^c_{i,l}
\bigr) \bigr)\Delta w_{j,l}
\\
&&{}+\sum_{l=0}^{m-1} \sum
_{j,k=1}^D \bigl( h_{ijk}\bigl(\X
^f_{i,l}\bigr)-h_{ijk}\bigl(X^c_{i,l}
\bigr) \bigr) ( \Delta w_{j,l} \Delta w_{k,l} -
\Omega_{jk} \Delta t )
\\
&&{}+ \sum_{l=0}^{m-1} M_{i,l} +
\sum_{l=0}^{m-1} N_{i,l}
\end{eqnarray*}
and using inequality (\ref{eqineq}) again gives
%
\begin{eqnarray}
\label{eqmain} && \mathbb{E} \Bigl[ \max_{m\leq n} \bigl\llvert
\X^f_{i,m} - X^c_{i,m} \bigr\rrvert
^p \Bigr]\nonumber
\\
&&\qquad\leq 5^{p-1} \Biggl(\mathbb{E} \Biggl[ \max
_{m\leq n} \Biggl\llvert \sum_{l=0}^{m-1}
\bigl( f_i\bigl(\X^f_{i,l}\bigr) -
f_i\bigl(X^c_{i,l}\bigr) \bigr) \Delta t
\Biggr\rrvert ^p \Biggr]\nonumber
\\
&&\hspace*{59pt}{} + \mathbb{E} \Biggl[ \max
_{m\leq n} \Biggl\llvert \sum_{l=0}^{m-1}
\sum_{j=1}^D \bigl( g_{ij}\bigl(
\X ^f_{i,l}\bigr)-g_{ij}\bigl(X^c_{i,l}
\bigr) \bigr)\Delta w_{j,l} \Biggr\rrvert ^p \Biggr]
\\
&&\hspace*{59pt}{}  + \mathbb{E} \Biggl[ \max_{m\leq n} \Biggl\llvert \sum
_{l=0}^{m-1} \sum_{j,k=1}^D
\bigl( h_{ijk}\bigl(\X ^f_{i,l}
\bigr)-h_{ijk}\bigl(X^c_{i,l}\bigr)\bigr)\nonumber
\\
&&\hspace*{145pt}{} \times  ( \Delta
w_{j,l} \Delta w_{k,l} - \Omega_{jk} \Delta t )
\Biggr\rrvert ^p \Biggr]\nonumber
\\
&&\hspace*{79pt}{}  + \mathbb{E} \Biggl[ \max_{m\leq n}
\Biggl\llvert \sum_{l=0}^{m-1}
M_{i,l} \Biggr\rrvert ^p \Biggr]
 + \mathbb{E} \Biggl[ \max
_{m\leq n} \Biggl\llvert \sum_{l=0}^{m-1}
N_{i,l} \Biggr\rrvert ^p \Biggr] \Biggr).\nonumber
\end{eqnarray}

We now need to bound each of the five expectations on the right-hand side
of~(\ref{eqmain}). The last is the easiest, since
\[
\Biggl\llvert \sum_{l=0}^{m-1}
N_{i,l} \Biggr\rrvert ^p \leq m^{p-1} \sum
_{l=0}^{m-1} \llvert N_{i,l}
\rrvert ^p \leq n^{p-1} \sum_{l=0}^{n-1}
\llvert N_{i,l} \rrvert ^p
\]
and therefore
\[
\mathbb{E} \Biggl[ \max_{m\leq n} \Biggl\llvert \sum
_{l=0}^{m-1} N_{i,l} \Biggr\rrvert
^p \Biggr] \leq n^{p-1} \sum_{l=0}^{n-1}
\mathbb{E} \bigl[ \llvert N_{i,l} \rrvert ^p \bigr] \leq
c_1 (n \Delta t)^p \Delta t^p
\]
for some constant $c_1$ (which like other such constants in this
proof will depend on $p$, $L$ and $T$ but not on $\Delta t$) due to
Lemma \ref{lemmaeqn2}.

Similarly, there exists a constant $c_2$ such that
\begin{eqnarray*}
\mathbb{E} \Biggl[ \max_{m\leq n} \Biggl\llvert \sum
_{l=0}^{m-1} \bigl( f_i\bigl(
\X^f_{i,l}\bigr) - f_i\bigl(X^c_{i,l}
\bigr) \bigr) \Delta t \Biggr\rrvert ^p \Biggr] & \leq&
n^{p-1}\sum_{l=0}^{m-1} \mathbb{E}
\bigl[ \bigl\llvert f_i\bigl(\X^f_{i,l}\bigr) -
f_i\bigl(X^c_{i,l}\bigr) \bigr\rrvert
^p \bigr] \Delta t^p
\\
& \leq& c_2 (n \Delta t)^{p-1} \sum
_{m=0}^{n-1} S_m \Delta t
\end{eqnarray*}
with the second step being due to the uniform bound on the first
derivatives of $f$.

The other three expectations in (\ref{eqmain}) involve martingales,
and so we can use the discrete Burkholder--Davis--Gundy inequality
\cite{bdg72}.
Starting again with the easiest,
there are constants $c_3$, $c_4$ such that
\begin{eqnarray*}
\mathbb{E} \Biggl[ \max_{m\leq n} \Biggl\llvert \sum
_{l=0}^{m-1} M_{i,l} \Biggr\rrvert
^p \Biggr] &\leq& c_3 \mathbb{E} \Biggl[ \Biggl( \sum
_{m=0}^{n-1} ( M_{i,m}
)^2 \Biggr)^{p/2} \Biggr]
\\
&\leq& c_3
n^{p/2 - 1} \sum_{m=0}^{n-1} \mathbb{E}
\bigl[ | M_{i,m} |^p \bigr] \leq c_4 (n \Delta
t)^{p/2} \Delta t^p
\end{eqnarray*}
with the final step being due to Lemma \ref{lemmaeqn2}.

Similarly, there exists a constant $c_5$ such that
\begin{eqnarray*}
&& \mathbb{E} \Biggl[ \max_{m\leq n} \Biggl\llvert \sum
_{l=0}^{m-1} \sum
_{j=1}^D \bigl( g_{ij}\bigl(\X
^f_{i,l}\bigr)-g_{ij}\bigl(X^c_{i,l}
\bigr) \bigr)\Delta w_{j,l} \Biggr\rrvert ^p \Biggr]
\\
&&\qquad\leq c_5 n^{p/2-1} D^{p-1} \sum
_{m=0}^{n-1} \sum_{j=1}^D
\mathbb {E} \bigl[ \bigl\llvert \bigl( g_{ij}\bigl(
\X^f_{i,m}\bigr)-g_{ij}\bigl(X^c_{i,m}
\bigr) \bigr)\Delta w_{j,m} \bigr\rrvert ^p \bigr].
\end{eqnarray*}
Since $\Delta w_{j,m}$ is independent of both $\X^f_{i,m}$ and $X^c_{i,m}$,
it follows that
\[
\mathbb{E} \bigl[ \bigl\llvert \bigl( g_{ij}\bigl(
\X^f_{i,m}\bigr)-g_{ij}\bigl(X^c_{i,m}
\bigr) \bigr)\Delta w_{j,m} \bigr\rrvert ^p \bigr] =
\mathbb{E} \bigl[ \bigl\llvert g_{ij}\bigl(\X^f_{i,m}
\bigr)-g_{ij}\bigl(X^c_{i,m}\bigr) \bigr\rrvert
^p \bigr] \mathbb{E} \bigl[ \llvert \Delta w_{j,m} \rrvert
^p \bigr].
\]
Hence, because of the uniformly bounded first derivatives of $g$,
and standard results for the moments of Brownian increments, there
exists a constant $c_6$ such that
\[
\mathbb{E} \Biggl[ \max_{m\leq n} \Biggl\llvert \sum
_{l=0}^{m-1} \sum_{j=1}^D
\bigl( g_{ij}\bigl(\X ^f_{i,l}
\bigr)-g_{ij}\bigl(X^c_{i,l}\bigr) \bigr)\Delta
w_{j,l} \Biggr\rrvert ^p \Biggr] \leq c_6 (n
\Delta t)^{p/2-1} \sum_{m=0}^{n-1}
S_m \Delta t.
\]\eject

Finally, following the same approach, there exists a constant $c_7$
such that
\begin{eqnarray*}
&& \mathbb{E} \Biggl[ \max_{m\leq n} \Biggl\llvert \sum
_{l=0}^{m-1} \sum_{j,k=1}^D
\bigl( h_{ijk}\bigl(\X ^f_{i,l}
\bigr)-h_{ijk}\bigl(X^c_{i,l}\bigr) \bigr) ( \Delta
w_{j,l} \Delta w_{k,l} - \Omega_{jk} \Delta t )
\Biggr\rrvert ^p \Biggr]
\\
&&\qquad\leq c_5 (n \Delta t)^{p/2-1} \Delta
t^{p/2} \sum_{m=0}^{n-1}
S_m \Delta t.
\end{eqnarray*}

Since $n\Delta t \leq T$ in all of the above inequalities,
combining the above bounds for each term in (\ref{eqmain}),
and inserting these into (\ref{eqmax}), there then exists
a constant $c_8$ such that
\[
S_n \leq c_8 \Biggl( \Delta t^p + \sum
_{m=0}^{n-1} S_m \Delta t
\Biggr).
\]
The desired result is then obtained from a discrete Gr{\"o}nwall inequality.

\subsection{Proof of Lemma \texorpdfstring{\protect\ref{corodiff}}{4.11}}
The identity
$X^{f}_{n+\halfs} - X^{a}_{n+\halfs}
= (X^{f}_{n+\halfs} - X^{f}_n)
+ (X^{f}_n - X^{a}_n)
+ (X^{a}_n -X^{a}_{n+\halfs})$ gives
\begin{eqnarray*}
&& \bigl\llVert X^{f}_{n+\halfs} - X^{a}_{n+\halfs}
\bigr\rrVert ^p
\\
&&\qquad \leq 3^{p-1} \bigl( \bigl\llVert
X^{f}_{n+\halfs} - X^{f}_n \bigr\rrVert
^p + \bigl\llVert X^{f}_n -
X^{a}_n \bigr\rrVert ^p + \bigl\llVert
X^{a}_{n+\halfs} - X^{a}_n \bigr\rrVert
^p \bigr).
\end{eqnarray*}
It then follows from Lemmas \ref{lemonestep2} and
\ref{lemmadiff} that there exists a constant $K_p$,
independent of both $\Delta t$ and $n$, for which
\[
\mathbb{E} \bigl[ \bigl\llVert X^{f}_{n+\halfs} -
X^{a}_{n+\halfs} \bigr\rrVert ^p \bigr] \leq
K_p \Delta t^{p/2}.
\]

\subsection{Proof of Lemma \texorpdfstring{\protect\ref{coromain}}{4.12}}

Averaging the discrete equations for
$X^{f}_{n+\halfs}$ and $X^{a}_{n+\halfs}$,
and using the identities
$\delta w_n = \frac{1}{2} \Delta w_n +
\frac{1}{2}(\delta w_n-\delta w_{n+\halfs})$ and
$\delta w_{n+\halfs} = \frac{1}{2} \Delta w_n -
\frac{1}{2}(\delta w_n-\delta w_{n+\halfs})$,
gives
%
\begin{equation}
\X^f_{i,n+\halfs} = \X^f_{i,n} +
\frac{1}{2} f_i\bigl(\X^f_n\bigr)
\Delta t + \frac{1}{2} \sum_{j=1}^D
g_{ij}\bigl(\X^f_n\bigr) \Delta
w_{j,n} + N_{i,n}, \label{eqfinal1}
\end{equation}
where
\begin{eqnarray*}
N_{i,n} &=& \frac{1}{2} \biggl( \frac{1}{2}\bigl(
f_i\bigl(X^{f}_n\bigr) + f_i
\bigl(X^{a}_n\bigr)\bigr) - f_i\bigl(
\X^f_n\bigr) \biggr) \Delta t
\\
&&{} + \frac{1}{2} \sum_{j=1}^D
\biggl( \frac{1}{2}\bigl( g_{ij}\bigl(X^{f}_n
\bigr) + g_{ij}\bigl(X^{a}_n\bigr) \bigr) -
g_{ij}\bigl(\X^f_n\bigr) \biggr) \Delta
w_{j,n}
\\
&&{} + \frac{1}{4} \sum_{j=1}^D
\bigl( g_{ij}\bigl(X^{f}_n\bigr) -
g_{ij}\bigl(X^{a}_n\bigr) \bigr) ( \delta
w_{j,n} - \delta w_{j,n+\halfs})
\\[-2pt]
&&{} + \frac{1}{2} \sum_{j,k=1}^D
\biggl( h_{ijk}\bigl(X^{f}_n\bigr) \biggl(\delta
w_{j,n} \delta w_{k,n} - \frac{1}{2}\Omega_{jk}
\Delta t\biggr)
\\[-2pt]
&&\hspace*{50pt} + h_{ijk}\bigl(X^{a}_n\bigr)
\biggl(\delta w_{j,n+\halfs} \delta w_{k,n+\halfs}- \frac{1}{2}\Omega
_{jk}\Delta t\biggr) \biggr).
\end{eqnarray*}
Following the same method of analysis as in the proof of Lemma
\ref{lemmaeqn1} it can be proved that $\mathbb{E}[ |N_{i,n}|^p]$ has
an $O(\Delta t^p)$ bound.

Next, defining $X^c_{n+\halfs}$ to be the linear interpolant value
$\frac{1}{2}(X^c_n + X^c_{n+1})$, then the equation for $X^c_{n+1}$ yields
%
\begin{eqnarray}\label{eqfinal2}
X^c_{i,n+\halfs} &=& X^c_{i,n} +
\frac{1}{2} f_i\bigl(X^c_n\bigr)
\Delta t + \frac{1}{2} \sum_{j=1}^d
g_{ij}\bigl(X^c_n\bigr) \Delta
w_{j,n}
\nonumber\\[-9pt]\\[-9pt]
&&{} + \frac{1}{2} \sum_{j,k=1}^d
h_{ijk}\bigl(X^c_n\bigr) (\Delta
w_{j,n} \Delta w_{k,n} - \Omega_{jk} \Delta t).\nonumber
\end{eqnarray}

Subtracting (\ref{eqfinal2}) from (\ref{eqfinal1}) gives
\begin{eqnarray*}
\X^f_{i,n+\halfs}- X^c_{i,n+\halfs} &=&
\X^f_{i,n} - X^c_{i,n} +
\frac{1}{2} \bigl(f_i\bigl(\X^f_n
\bigr) - f_i\bigl(X^c_n\bigr) \bigr) \Delta
t
\\[-2pt]
&&{} + \frac{1}{2} \sum_{j=1}^d
\bigl( g_{ij}\bigl(\X^f_n\bigr)-
g_{ij}\bigl(X^c_n\bigr) \bigr) \Delta
w_{j,n}
\\[-2pt]
&&{} + N_{i,n} + \frac{1}{2} \sum_{j,k=1}^d
h_{ijk}\bigl(X^c_n\bigr) (\Delta
w_{j,n} \Delta w_{k,n} - \Omega_{jk} \Delta t).
\end{eqnarray*}
%
Using the bounds on $\mathbb{E}[ \| \X^f_n - X^c_n \|^p]$, the
bounded first derivatives of $f(x)$ and $g(x)$, the uniform bound
on $\mathbb{E}[ | h_{ijk}(X^c_n) |^p]$ and standard results for Brownian
increments, we can conclude that there exists a constant $K_p$,
independent of both $\Delta t$ and $n$, such that
such that
\[
\mathbb{E} \bigl[ \bigl\llVert \X^f_{i,n+\halfs}-
X^c_{i,n+\halfs} \bigr\rrVert ^p \bigr] \leq
K_p \Delta t^p.
\]
\end{appendix}



%

\printaddresses


\begin{thebibliography}{17}
\bibitem{bdg72}
%
\begin{binproceedings}[mr]
\bauthor{\bsnm{Burkholder},~\bfnm{D.~L.}\binits{D.~L.}},
\bauthor{\bsnm{Davis},~\bfnm{B.~J.}\binits{B.~J.}} \AND
\bauthor{\bsnm{Gundy},~\bfnm{R.~F.}\binits{R.~F.}}
(\byear{1972}).
\btitle{Integral inequalities for convex functions of operators on martingales}.
In \bbooktitle{Proceedings of the {S}ixth {B}erkeley {S}ymposium on
{M}athematical {S}tatistics and {P}robability ({U}niv. {C}alifornia,
{B}erkeley, {C}alif., 1970/1971)}
\bvolume{2}
\bpages{223--240}.
\bpublisher{Univ. California Press},
\blocation{Berkeley, CA}.
\bid{mr={0400380}}
\end{binproceedings}
%
\bptok{imsref}\vadjust{\goodbreak}%
\endbibitem

\bibitem{cc80}
%
\begin{bincollection}[mr]
\bauthor{\bsnm{Clark},~\bfnm{J.~M.~C.}\binits{J.~M.~C.}} \AND
\bauthor{\bsnm{Cameron},~\bfnm{R.~J.}\binits{R.~J.}}
(\byear{1980}).
\btitle{The maximum rate of convergence of discrete approximations for
stochastic differential equations}.
In \bbooktitle{Stochastic Differential Systems ({P}roc. {IFIP}--{WG}
7/1 {W}orking {C}onf., {V}ilnius, 1978)}.
\bseries{Lecture Notes in Control and Information Sci.}
\bvolume{25}
\bpages{162--171}.
\bpublisher{Springer},
\blocation{Berlin}.
\bid{mr={0609181}}
\end{bincollection}
%
\bptok{imsref}%
\endbibitem

\bibitem{dg95}
%
\begin{barticle}[mr]
\bauthor{\bsnm{Duffie},~\bfnm{Darrell}\binits{D.}} \AND
\bauthor{\bsnm{Glynn},~\bfnm{Peter}\binits{P.}}
(\byear{1995}).
\btitle{Efficient {M}onte {C}arlo simulation of security prices}.
\bjournal{Ann. Appl. Probab.}
\bvolume{5}
\bpages{897--905}.
\bid{issn={1050-5164}, mr={1384358}}
\end{barticle}
%
\bptok{imsref}%
\endbibitem

\bibitem{gl94}
%
\begin{barticle}[mr]
\bauthor{\bsnm{Gaines},~\bfnm{J.~G.}\binits{J.~G.}} \AND
\bauthor{\bsnm{Lyons},~\bfnm{T.~J.}\binits{T.~J.}}
(\byear{1994}).
\btitle{Random generation of stochastic area integrals}.
\bjournal{SIAM J. Appl. Math.}
\bvolume{54}
\bpages{1132--1146}.
\bid{doi={10.1137/S0036139992235706}, issn={0036-1399}, mr={1284705}}
\end{barticle}
%
\bptok{imsref}%
\endbibitem

\bibitem{giles08b}
%
\begin{bincollection}[mr]
\bauthor{\bsnm{Giles},~\bfnm{Mike}\binits{M.}}
(\byear{2008}).
\btitle{Improved multilevel {M}onte {C}arlo convergence using the
{M}ilstein scheme}.
In \bbooktitle{Monte {C}arlo and Quasi-{M}onte {C}arlo Methods 2006}
(\beditor{\bfnm{A.}\binits{A.}~\bsnm{Keller}},
\beditor{\bfnm{S.}\binits{S.}~\bsnm{Heinrich}} \AND
\beditor{\bfnm{H.}\binits{H.}~\bsnm{Niederreiter}}, eds.)
\bpages{343--358}.
\bpublisher{Springer},
\blocation{Berlin}.
\bid{doi={10.1007/978-3-540-74496-2_20}, mr={2479233}}
\end{bincollection}
%
\bptok{imsref}%
\endbibitem

\bibitem{giles08}
%
\begin{barticle}[mr]
\bauthor{\bsnm{Giles},~\bfnm{Michael~B.}\binits{M.~B.}}
(\byear{2008}).
\btitle{Multilevel {M}onte {C}arlo path simulation}.
\bjournal{Oper. Res.}
\bvolume{56}
\bpages{607--617}.
\bid{doi={10.1287/opre.1070.0496}, issn={0030-364X}, mr={2436856}}
\end{barticle}
%
\bptok{imsref}%
\endbibitem

\bibitem{ghm09}
%
\begin{barticle}[mr]
\bauthor{\bsnm{Giles},~\bfnm{Michael~B.}\binits{M.~B.}},
\bauthor{\bsnm{Higham},~\bfnm{Desmond~J.}\binits{D.~J.}} \AND
\bauthor{\bsnm{Mao},~\bfnm{Xuerong}\binits{X.}}
(\byear{2009}).
\btitle{Analysing multi-level {M}onte {C}arlo for options with
non-globally {L}ipschitz payoff}.
\bjournal{Finance Stoch.}
\bvolume{13}
\bpages{403--413}.
\bid{doi={10.1007/s00780-009-0092-1}, issn={0949-2984}, mr={2519838}}
\end{barticle}
%
\bptok{imsref}%
\endbibitem

\bibitem{giles2012stochastic}
%
\begin{barticle}[mr]
\bauthor{\bsnm{Giles},~\bfnm{Michael~B.}\binits{M.~B.}} \AND
\bauthor{\bsnm{Reisinger},~\bfnm{Christoph}\binits{C.}}
(\byear{2012}).
\btitle{Stochastic finite differences and multilevel {M}onte {C}arlo
for a class of {SPDE}s in finance}.
\bjournal{SIAM J. Financial Math.}
\bvolume{3}
\bpages{572--592}.
\bid{doi={10.1137/110841916}, issn={1945-497X}, mr={2968046}}
\end{barticle}
%
\bptok{imsref}%
\endbibitem

\bibitem{gw09}
%
\begin{bincollection}[mr]
\bauthor{\bsnm{Giles},~\bfnm{Michael~B.}\binits{M.~B.}} \AND
\bauthor{\bsnm{Waterhouse},~\bfnm{Benjamin~J.}\binits{B.~J.}}
(\byear{2009}).
\btitle{Multilevel quasi-{M}onte {C}arlo path simulation}.
In \bbooktitle{Advanced Financial Modelling}.
\bseries{Radon Ser. Comput. Appl. Math.}
\bvolume{8}
\bpages{165--181}.
\bpublisher{Walter de Gruyter},
\blocation{Berlin}.
\bid{doi={10.1515/9783110213140.165}, mr={2648461}}
\end{bincollection}
%
\bptok{imsref}%
\endbibitem

\bibitem{glasserman04}
%
\begin{bbook}[mr]
\bauthor{\bsnm{Glasserman},~\bfnm{Paul}\binits{P.}}
(\byear{2004}).
\btitle{Monte {C}arlo Methods in Financial Engineering}.
\bpublisher{Springer},
\blocation{New York}.
\bid{mr={1999614}}
\end{bbook}
%
\bptok{imsref}%
\endbibitem

\bibitem{heston93}
%
\begin{barticle}[auto:STB|2013/12/09|07:59:19]
\bauthor{\bsnm{Heston},~\bfnm{S.~I.}\binits{S.~I.}}
(\byear{1993}).
\btitle{A closed-form solution for options with stochastic volatility
with applications to bond and currency options}.
\bjournal{Rev. Financ. Stud.}
\bvolume{6}
\bpages{327--343}.
\end{barticle}
%
\bptok{imsref}%
\endbibitem

\bibitem{ks91}
%
\begin{bbook}[mr]
\bauthor{\bsnm{Karatzas},~\bfnm{Ioannis}\binits{I.}} \AND
\bauthor{\bsnm{Shreve},~\bfnm{Steven~E.}\binits{S.~E.}}
(\byear{1991}).
\btitle{Brownian Motion and Stochastic Calculus},
\bedition{2nd} ed.
\bseries{Graduate Texts in Mathematics}
\bvolume{113}.
\bpublisher{Springer},
\blocation{New York}.
\bid{doi={10.1007/978-1-4612-0949-2}, mr={1121940}}
\end{bbook}
%
\bptok{imsref}%
\endbibitem

\bibitem{kp92}
%
\begin{bbook}[mr]
\bauthor{\bsnm{Kloeden},~\bfnm{Peter~E.}\binits{P.~E.}} \AND
\bauthor{\bsnm{Platen},~\bfnm{Eckhard}\binits{E.}}
(\byear{1992}).
\btitle{Numerical Solution of Stochastic Differential Equations}.
\bpublisher{Springer},
\blocation{Berlin}.
\bid{mr={1214374}}
\end{bbook}
%
\bptok{imsref}%
\endbibitem

\bibitem{mt04}
%
\begin{bbook}[mr]
\bauthor{\bsnm{Milstein},~\bfnm{G.~N.}\binits{G.~N.}} \AND
\bauthor{\bsnm{Tretyakov},~\bfnm{M.~V.}\binits{M.~V.}}
(\byear{2004}).
\btitle{Stochastic Numerics for Mathematical Physics}.
\bseries{Scientific Computation}.
\bpublisher{Springer},
\blocation{Berlin}.
\bid{mr={2069903}}
\end{bbook}
%
\bptok{imsref}%
\endbibitem

\bibitem{muller02}
%
\begin{bmisc}[auto:STB|2013/12/09|07:59:19]
\bauthor{\bsnm{M{\"u}ller-Gronbach},~\bfnm{T.}\binits{T.}}
(\byear{2002}).
\bhowpublished{\textit{Strong approximation of systems of stochastic
differential equations}.
Habilitation thesis, TU, Darmstadt.}
\end{bmisc}
%
\bptok{imsref}%


\bibitem{ns12}
\begin{bmisc}[auto:STB|2014/01/06|10:16:28]
\bauthor{\bsnm{Neuenkirch},~\bfnm{Andreas}\binits{A.}} \AND
\bauthor{\bsnm{Szpruch},~\bfnm{Lukasz}\binits{L.}}
(\byear{2014}).
\bhowpublished{First order strong approximations of scalar SDEs with values in a domain.
\textit{Numer. Math.} To appear. arXiv preprint, available at \arxivurl{arXiv:1209.0390}.}
\end{bmisc}
\bptok{imsref}%
\endbibitem


\bibitem{rw01}
%
\begin{barticle}[mr]
\bauthor{\bsnm{Ryd{\'e}n},~\bfnm{Tobias}\binits{T.}} \AND
\bauthor{\bsnm{Wiktorsson},~\bfnm{Magnus}\binits{M.}}
(\byear{2001}).
\btitle{On the simulation of iterated {I}t\^o integrals}.
\bjournal{Stochastic Process. Appl.}
\bvolume{91}
\bpages{151--168}.
\bid{doi={10.1016/S0304-4149(00)00053-3}, issn={0304-4149}, mr={1807367}}
\end{barticle}
%
\bptok{imsref}%
\endbibitem

\bibitem{wiktorsson01}
%
\begin{barticle}[mr]
\bauthor{\bsnm{Wiktorsson},~\bfnm{Magnus}\binits{M.}}
(\byear{2001}).
\btitle{Joint characteristic function and simultaneous simulation of
iterated {I}t\^o integrals for multiple independent {B}rownian motions}.
\bjournal{Ann. Appl. Probab.}
\bvolume{11}
\bpages{470--487}.
\bid{doi={10.1214/aoap/1015345301}, issn={1050-5164}, mr={1843055}}
\end{barticle}
%
\bptok{imsref}%
\endbibitem

\end{thebibliography}
\end{document}